\crefname{hypothesis}{Hypothesis}{Hypotheses}
\title{Finite population effects on optimal communication for social foragers\thanks{Submitted to the editors December xx, 2022}}
\author{Hyunjoong Kim\thanks{Center for Mathematical Biology \& Department of Mathematics, University of Pennsylvania, Philadelphia, PA 19104 (\email{h6kim@sas.upenn.edu})} 
\and Yoichiro Mori\thanks{Center for Mathematical Biology \& Department of Mathematics \& Department of Biology, University of Pennsylvania, Philadelphia, PA 19104 (\email{y1mori@sas.upenn.edu})}
\and Joshua B. Plotkin\thanks{Center for Mathematical Biology \& Department of Mathematics \& Department of Biology, University of Pennsylvania, Philadelphia, PA 19104 (\email{jplotkin@sas.upenn.edu})}}
\newcommand*{\addFileDependency}[1]{
  \typeout{(#1)}
  \@addtofilelist{#1}
  \IfFileExists{#1}{}{\typeout{No file #1.}}
}
\newcommand*{\myexternaldocument}[1]{%
    \externaldocument{#1}%
    \addFileDependency{#1.tex}%
    \addFileDependency{#1.aux}%
}
\newcommand{\paren}[1]{\left(#1\right)}
\newcommand{\D}[2]{\frac{d#1}{d#2}}
\newcommand{\PD}[2]{\frac{\partial#1}{\partial#2}}
\newcommand{\at}[2]{\left. #1 \right|_{#2}}
\newcommand{\mc}[1]{\mathcal{#1}}
\newcommand{\bm}[1]{\boldsymbol{#1}}
\newcommand{\abs}[1]{\left\lvert #1 \right\rvert}
\newcommand{\norm}[1]{\left\lVert #1 \right\rVert}
\newcommand{\dual}[2]{\left\langle #1,#2 \right\rangle}
\begin{document}

\maketitle

\begin{abstract}
Foraging is crucial for animals to survive. Many species forage in groups, as individuals communicate to share information about the location of available resources. For example, eusocial foragers, such as honey bees and many ants, recruit members from their central hive or nest to a known foraging site. However, the optimal level of communication and recruitment depends on the overall group size, the distribution of available resources, and the extent of interference between multiple individuals attempting to forage from a site. In this paper, we develop a discrete-time Markov chain model of eusocial foragers, who communicate information with a certain probability. We compare the stochastic model and its corresponding infinite-population limit.
We find that foraging efficiency tapers off when recruitment probability is too high -- a phenomenon that does not occur in the infinite-population model, even though it occurs for any finite population size. 
The marginal inefficiency at high recruitment probability increases as the population increases, similar to a boundary layer. In particular, we prove there is a significant gap between the foraging efficiency of finite and infinite population models in the extreme case of complete communication.  We also analyze this phenomenon by approximating the stationary distribution of foragers over sites in terms of mean escape times from multiple quasi-steady states. We conclude that for any finite group of foragers, an individual who has found a resource should only sometimes recruit others to the same resource. We discuss the relationship between our analysis and multi-agent multi-arm bandit problems.
\end{abstract}

\begin{keywords}
optimal foraging, finite population effect, social system dynamics, multi-agent multi-armed bandits.
\end{keywords}

\begin{AMS}
60F99, 60J20, 91D10, 92D50. 
\end{AMS}

\section{Introduction}

Foraging is a crucial behavior for animals to survive and reproduce. Many species forage in groups, where individuals share information about available resources or possible predators. In such settings, the total group size may have a strong effect on foraging behavior and efficiency. Larger communities are often more successful than smaller ones, due to the benefits of cooperation and information sharing \cite{Boesch2000,Pena2018}. However, the relationship between group size and foraging efficiency is not always straightforward. Some research has suggested that there may be diminishing returns to group size, meaning that the benefits of group living and communication may taper off as group size increases beyond a certain point \cite{Grueter2018,Ioannou2008,Pena2018}. 

Many social foragers, such as most ants and honeybees, have a particular site (nest) where foragers carry resources back to consume and store -- and they are referred to as central place foragers (CPF). 
Such a center can offer safety against predators (compared to foraging areas) \cite{Bell1990, Ward1965, Ward1973, Zahavi1971} and operate as an information center \cite{Bell1990}. 
Resource distributions are often spatially inhomogeneous, forming clusters or patches. Then a forager’s search direction from the center affects whether they will find a resource. 
An important question is how CPFs determine where to be directed from their center. 
CPFs are allocated by various ways of recruitment, both inside and outside the center, by another individual who has already found a resource. For example, honey bees use ``waggle dances" to instruct other honey bees towards a known food source \cite{Haldane1954,Schurch2015}. Many species of ants make a chemical trail from the center to a food source \cite{David2009,Detrain2008}, and also share the information in their colony by sharing the food sample \cite{Lenoir1982}.
In all these cases, overall foraging efficiency of the group depends on the chance that one individual who knows the location of (one) resource site recruits other foragers to the same site.

There is large body of mathematical models for studying the problem of forager allocation. Many studies on CPF allocation assume an infinite population \cite{Biesmeijer2001,Camazine1991,Dukas1998,Seeley1991} and determine the optimal forager allocation in terms of an ``ideal free distribution" \cite{Fretwell1969}. 
However, stochastic models are required to understand  finite size effects; and it not always the case that the  behavior of large-population stochastic models will approach the behavior of the infinite-population limit. 
Apart from CPFs, there have been many stochastic models in optimal foraging:
Individual search processes have been studied based on random walks \cite{ Arehart2022b, Davidson2019,Garg2021,Kilpatrick2021, Viswanathan1996, Viswanathan1999};
Departure time to another foraging site has been analyzed as a renewal process for individuals \cite{Giraldeau2000,Stephens1982} and by a mechanistic drift-diffusion model \cite{Bidari2022};
Game theoretic frameworks \cite{Giraldeau2000,Pena2018,Schaffer1988} (and citations therein) have been proposed for social groups to understand when group membership benefits individuals. 
However, stochastic studies on CPFs are relatively under-explored.

In this paper, we introduce a finite Markov process to understand how the finite population of CPFs allocated over resource patches. We are especially interested in qualitative deviations between the finite-population model and its infinite population limit. First, we define foraging efficiency $\mu$ in terms of the expected value of the long-term reward rate to the entire group. Then, we introduce a recruitment probability $\rho$ that quantifies the degree of communication at the center. We analyze the optimal recruitment probability $\rho^\star$ that maximizes $\mu$, and then we contrast $\mu$ and $\rho^\star$ as a function of group size, $\xi$. 

Our foraging model of CPFs is closely related to the multi-armed bandit (MAB) problem, which is a mathematical model developed to quantify the explore-exploit trade-off \cite{Lai1985, Lattimore2020, Madhushani2021b, Madhushani2021a, Robbins1952, Sutton2018}.
The classical MAB concerns a single agent making a series of choices among multiple arms (or options) and receiving a reward after each subsequent choice. Different types of MAB problems are defined in terms of (i) the number of agents, (ii) the communication constraint among agents, and (iii) the rules for drawing rewards.
In our model, foragers are analogous to agents, the recruitment process corresponds to the communication constraint, and the foraging process is analogous to the reward constraint. The particular MAB problem that maps to our model is a stochastic multi-agent MAB problem for agents with random communication.
One key feature of our model that distinguishes it from a typical MAB problem, however, is that the reward rule of one agent is not independent of the choices made by other agents. That is, when many agents simultaneously choose the same arm, there is interference that reduces the rate of reward on that arm. This added complications has many potential applications in decision-making problems under feedback between decisions and the environment. In addition, our analysis focuses on the long-run time-averaged reward rather than the finite-time net reward, which is usually considered in other MAB problems. (Although we do discuss the finite-time problem in the context of the reward convergence rate.)

The paper is structured as follows: We summarize our main results in Sect. \ref{sect1.1}.
In Sect. \ref{sect2}, we introduce an infinite population model for CPFs as a discrete-time deterministic process. We then analyze its steady-state solutions and investigate the linear stability analysis of the deterministic model.
In Sect. \ref{sect3}, we introduce a stochastic, finite-population model, which converges to the deterministic model as $\xi \to \infty$, and we investigate the convergence of $\mu$ and $\rho^\star$ as $\xi \to \infty$ numerically. Interestingly, we observe a boundary layer of $\mu(\rho)$ near $\rho = 1$ at large $\xi$ -- so that the infinite-population model has qualitatively different behavior than the finite-population model, regardless of how large the (finite) population size. We analyze the time convergence of the population model by considering relaxation time.
In Sect. \ref{sect4}, we analyze the stationary distribution for the stochastic model at $\rho = 1$, which can explain the existence of the boundary layer.

\subsection{Summary of the main results} \label{sect1.1}
We show that the optimal recruitment probabilities of finite and infinite population models are not the same, even as the population size grows large. If $\xi = \infty$, then $\rho^\star = 1$ (and it is unique except in the case of a uniform resource abundance over patches), as shown in Sect. \ref{sect2}. In other words, it will be optimal to share successful experiences with all other foragers in the infinite-population case. However, $\rho^\star < 1$ if $\xi < \infty$. That is, regardless of the size of the finite population, some inefficiency arises when foragers share their successful experience to too many others. To help understand this counter-intuitive result, in  Sect. \ref{sect4} we analyze how high recruitment can lead too many foragers to a single (most abundant) foraging site, which causes inefficiency by not exploiting other sites. Furthermore, we numerically show that $1-\rho^\star = \mathcal{O}(\xi^{-1})$ in Sect. \ref{sect3}.

More generally, we find that $\mu_\xi(\rho)$ has a boundary layer at $\rho = 1$ for sufficiently large $\xi$. This implies that $\mu_\xi(\rho)$ is non-monotonic in $\rho$ and thus have $\rho^\star < 1$ if $\xi < \infty$.
We proved that
\begin{equation}
	\lim_{\xi \to \infty}\mu_\xi(1) < \lim_{\rho \to 1} \mu(\rho),
\end{equation}
which strongly suggests for a boundary layer at $\rho = 1$. The presence of boundary layer can be proven if we show the limit $\lim_{\xi \to \infty} \mu_\xi (\rho) = \mu(\rho)$ when $\rho < 1$. More details and further discussions can be found in Sect. \ref{sect4}.

Our analysis also reveals that the stationary distribution of the finite-population model does not always concentrate in the vicinity of the ideal free distribution that gives the same foraging probabilities for all sites.  Especially at $\rho = 1$, the deterministic limit model has multiple fixed points with the same foraging probabilities for some sites and zero for others. The stationary distribution concentrates on the fixed point where all foragers are in the most resource-abundant site. This appears more clearly when $\rho$ and $\xi$ are large because it is harder to escape the ``quasi-steady state.''  We investigate why that site is preferred in Sect. \ref{sect4}.

\section{Infinite population model} \label{sect2}
We consider the following foraging model, as illustrated in Fig. \ref{fig1}.
\begin{figure}[t!]
\centering
\includegraphics[width=.8\columnwidth]{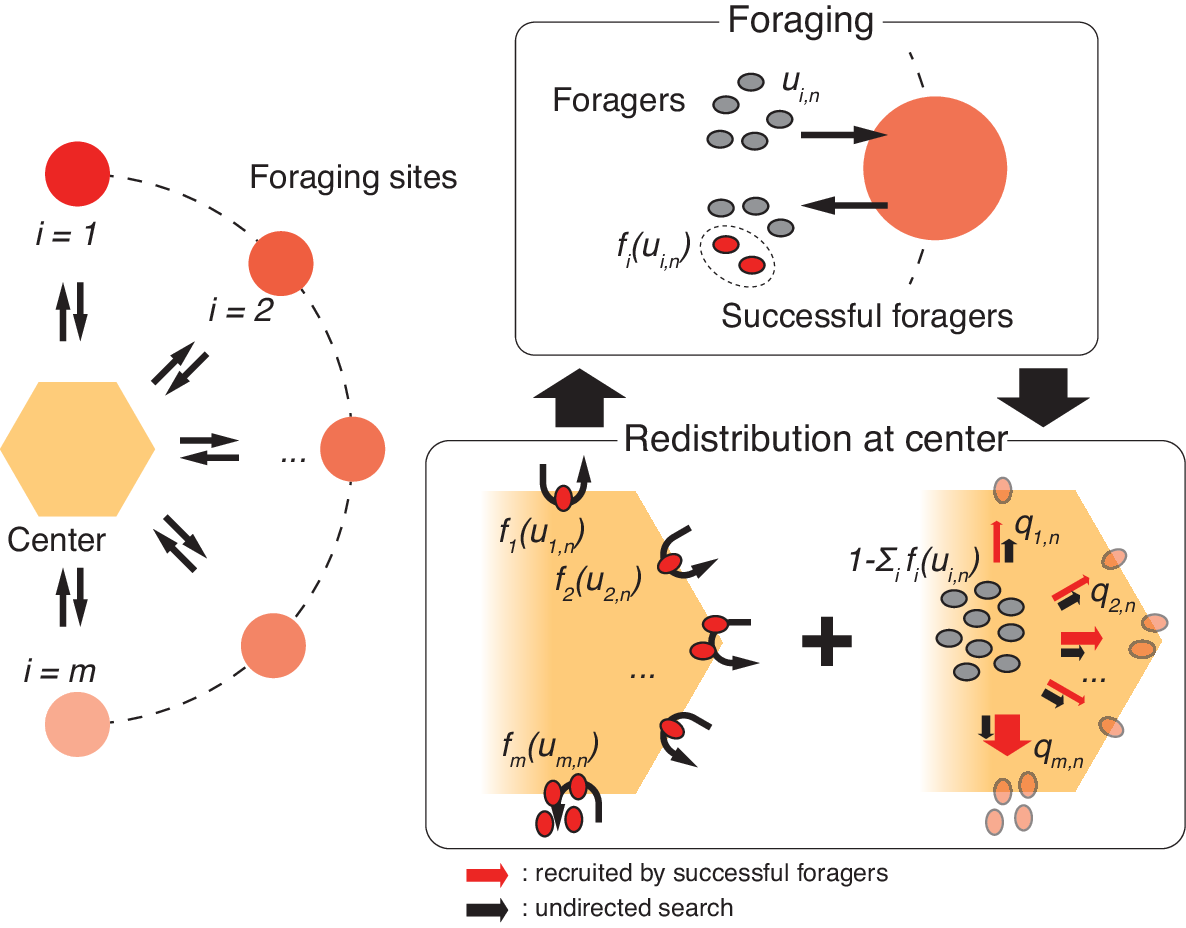}
\caption{\label{fig1} Illustration of foraging model for multiple foraging sites.} 
\end{figure}
There are $m$ foraging sites at a fixed equal distance from the center. 
Foragers fly out to one of the $m$ sites, forage (or attempt to forage), return to the center, 
and fly out again. (We use the term ``fly" generically to represent an attempt to forage from a particular site, analogous pulling a specific ``arm" in the MAB problem.)
We assume that all foragers have fixed speeds and spend negligible time at the foraging site.
Let $u_{i,n}$ be the proportion of foragers that fly to site $i=1,\cdots m$ at time $n=0,1,2,\cdots$. By definition, we have:
\begin{equation}\label{udef}
\sum_{i=1}^m u_{i,n}=1, \; u_{i,n}\geq 0.
\end{equation}
We now introduce a recurrence relation for $u_{i,n}$.
Consider the proportion of foragers that fly out to foraging site $i$ at time $n$. Of those foragers, those that successfully forage are given by:
\begin{equation} \label{success_prob}
v_{i,n}=u_{i,n} \phi_i(u_{i,n}) \equiv f_i(u_{i,n}),
\end{equation}
where $\phi_i(x)$ represents the probability of successfully foraging under competition with $x$ proportion of foragers.
We assume that the $f_i(x)$ satisfies certain structural conditions. 
From the natural assumption $f_i(x)\leq x$, we choose $\phi_i$ (and hence $f_i$)
\begin{equation}\label{Kx}
f_i(x)=\frac{K_i x}{K_i+x},
\end{equation}
where $K_i$ represents the abundance of food at site $i$.
At the time $n+1$, the successful foragers will return to their 
respective foraging sites. Those that were unsuccessful will redistribute to different foraging directions.
\begin{equation}
u_{i,n+1}= f_i(u_{i,n})+(1-r_n) q_{i,n},
\end{equation}
where
\[
	r_n=\sum_{i=1}^m f_i(u_{i,n}), \quad q_{i,n} = \frac{\rho f_i(u_{i,n})}{r_n}+\frac{1-\rho}{m}.
\]
Here $\rho \in [0,1]$ quantifies the fraction of foragers recruited to the successful foraging sites. 
The rest of the unsuccessful foragers redistribute uniformly.
It is convenient to introduce vector notation. Let $\bm{u}_n=(u_{1,n},\cdots,u_{m,n})^{\rm T}$ and $\bm{f}(\bm{u}_{n})=(f_1(u_{1,n}),\cdots,f_m(u_{m,n}))^{\rm T}$.
Then, the above equation can be written as:
\begin{equation}\label{veceqn}
\bm{u}_{n+1}=\bm{f}(\bm{u}_n)+(1-\dual{\bm{1}}{\bm{f}(\bm{u}_n)})\paren{\frac{\rho}{\dual{\bm{1}}{\bm{f}(\bm{u}_n)}}\bm{f}(\bm{u}_n)+\frac{(1-\rho)}{m}\bm{1}}
\end{equation}
where $\bm{1}=(1,\cdots,1)^{\rm T}\in \mathbb{R}^m$ and $\langle \cdot, \cdot \rangle$ is the standard inner product. It is easily checked that:
\begin{equation}
\dual{\bm{1}}{\bm{u}_{n+1}}=\dual{\bm{1}}{\bm{u}_n}.
\end{equation}
We will assume, as in \eqref{udef}, that
\begin{equation}
\dual{\bm{1}}{\bm{u}_n}=1.
\end{equation}

\subsection{Model analysis when $\rho < 1$} 
Here we show that the infinite population model has a unique stable steady state when $0 \leq \rho < 1$, under some general conditions on $f_i$.

\subsubsection{Steady state solution}
First, we show that there is a unique steady state when $0 \leq \rho < 1$ under the following conditions for $f_i$: Let us assume that 
\begin{equation}\label{phicond}
f_i(0)=0, \; f_i(x)<x \text{ if } x>0,
\end{equation}
and the derivatives of $f_i(x)$ also satisfy
\begin{equation}\label{phiderivcond}
f_i'(x)>0,\; f_i''(x)<0.
\end{equation}
We finally assume that $f_i(x)$ saturates at high values of $x$
\begin{equation}\label{phiinfcond}
\lim_{x\to \infty} f_i(x)=K_i<\infty
\end{equation}
We note that the above two conditions imply the following:
\begin{equation}\label{phiderivinf}
\lim_{x\to \infty} f_i'(x)=0.
\end{equation}

The steady-state $\bm{u}\in \mathbb{R}^m$ satisfies:
\begin{equation}\label{uss}
	\bm{u}=\paren{(1-\rho)+\frac{\rho}{\mu}}\bm{f}(\bm{u})+\frac{1-\rho}{m}\paren{1-\mu}\bm{1}, \quad \mu=\dual{\bm{1}}{\bm{f}(\bm{u})}.
\end{equation}
Here  $\mu$ corresponds to the steady state of $r_n$. 
Note that $0<\mu<1$ by condition \eqref{phicond}. In component form, the first equation gives:
\begin{equation}\label{uieqn}
	u_i=\paren{1-\rho+\frac{\rho}{\mu}}f_i(u_i)+\frac{1-\rho}{m}(1-\mu) \equiv g_\mu(u_i).
\end{equation}
Then, for $0<\mu<1$, we have:
\begin{equation}
	g_\mu(0)=\frac{1-\rho}{m}(1-\mu)>0.
\end{equation}
Furthermore, given (\ref{phiderivcond}-\ref{phiderivinf}), $g_\mu(x)-x$ is an increasing function up to a certain value of $x=x_*$
($x_*$ can be $0$) after which it is a decreasing function that tends to $-\infty$ as $x\to \infty$. This implies that \eqref{uieqn} has a unique solution:
\begin{equation}
	u_i = h_i(\mu), \; h_i(\mu)=g_\mu(h_i(\mu)).
\end{equation}
At this point, let us note that:
\begin{equation}\label{sigma}
	\sigma_i=\at{\D{g_\mu}{x}}{x=h_i(\mu)}<1.
\end{equation}
This is because the derivative of $g_\mu(x)-x$ must be negative at $x=h_i(\mu)$.
We note that $h_i(\mu)$ is a decreasing function of $\mu$. Indeed, 
\begin{equation}
	\D{h_i}{\mu}=\PD{g_\mu}{\mu}+\at{\PD{g_\mu}{x}}{x=h_i(\mu)}\D{h_i}{\mu}.
\end{equation}
We see from \eqref{uieqn} that $\partial g_\mu/\partial \mu<0$ and by \eqref{sigma} we conclude that $\D{h_i}{\mu}<0$.
We also note that:
\begin{equation} \label{hlim}
	\lim_{\mu \to 1^-} h_i(\mu)=0.
\end{equation}
Now, let us substitute $u_i=h_i(\mu)$ into the second condition in \eqref{uss}.
\begin{equation}
\mu=\sum_{i=1}^m f_i(h_i(\mu)).
\end{equation}
Since $h_i(\mu)$ is positive for $0<\mu<1$ and is monotone decreasing with \eqref{hlim}, together with \eqref{phicond} and \eqref{phiderivcond}, we see that the above has a unique solution
in $0<\mu<1$.
This shows that the steady state is unique.

\subsubsection{Linear stability analysis}
Here we consider the stability of the steady state when $0 \leq \rho < 1$. If we linearize \eqref{veceqn} around the steady state, we have:
\begin{equation}
\begin{split}
\bm{w}_{n+1}&=\paren{A-\bm{\nu}\bm{\sigma}^{\rm T}}\bm{w}_n,\\
A&=\begin{pmatrix} 
\sigma_1 & 0 & \cdots  & 0\\
0 & \sigma_2 & \cdots & 0\\
\vdots & \vdots &  \ddots & \vdots \\
0 & 0 & \cdots & \sigma_m
\end{pmatrix}, \; 
\bm{\sigma}=\begin{pmatrix} \sigma_1\\ \sigma_2 \\ \vdots \\ \sigma_m
\end{pmatrix},\\
\bm{\nu}&=\paren{(1-\rho)+\frac{\rho}{\mu}}^{-1}\paren{\frac{\rho}{\mu}\bm{\lambda}+\frac{1-\rho}{m}\bm{1}}, \; \bm{\lambda}=\bm{\phi}(\bm{u})\mu^{-1}
\end{split}
\end{equation}
where $0<\sigma_i<1$ were defined in \eqref{sigma}. Note that all quantities are evaluated at the steady state satisfying \eqref{uss}.
We point out that:
\begin{equation}\label{nuprop}
\dual{\bm{\nu}}{\bm{1}}=1, \; \nu_i>0 \text{ where } \bm{\nu}=(\nu_1,\cdots,\nu_m)^{\rm T}.
\end{equation}
The equality above implies that $\bm{1}^{\rm T}$ is a left eigenvector of $A-\bm{\nu}\bm{\sigma}^{\rm T}$ with eigenvalue $0$.
To study the spectral properties of $A-\bm{\nu}\bm{\sigma}^{\rm T}$, consider the matrix:
\begin{equation}
Q=\begin{pmatrix} 
(\nu_1/\sigma_1)^{1/2} & 0 & \cdots  & 0\\
0 & (\nu_2/\sigma_2)^{1/2}& \cdots & 0\\
\vdots & \vdots &  \ddots & \vdots \\
0 & 0 & \cdots & (\nu_m/\sigma_m)^{1/2}
\end{pmatrix}
\end{equation}
We have:
\begin{equation*}
Q^{-1}(A-\bm{\nu}\sigma^{\rm T})Q=A-\bm{\alpha}\bm{\alpha}^{\rm T}, \; \bm{\alpha}=(\sqrt{\nu_1\sigma_1},\sqrt{\nu_2\sigma_2},\cdots,\sqrt{\nu_m\sigma_m})^{\rm T}.
\end{equation*}
So we may study the eigenvalues of $A-\bm{\alpha}\bm{\alpha}^{\rm T}$ instead of $A-\bm{\nu}\bm{\sigma}^{\rm T}$. First, note that 
$A-\bm{\alpha}\bm{\alpha}^{\rm T}$ is negative semi-definite. Indeed, for $\bm{x}=(x_1,\cdots,x_m)^{\rm T}\in \mathbb{R}^m$, we have:
\begin{equation*}
\begin{split}
\dual{\bm{x}}{(A-\bm{\alpha}\bm{\alpha}^{\rm T})\bm{x}}&=\dual{\bm{x}}{A\bm{x}}-\dual{\bm{\alpha}}{\bm{x}}^2
=\sum_{i=1}^m \sigma_ix_i^2-\paren{\sum_{i=1}^m \sqrt{\nu_i\sigma_i}x_i}^2\\
&\geq \sum_{i=1}^m \sigma_ix_i^2-\paren{\sum_{i=1}^m \nu_i}\paren{\sum_{i=1}^m \sigma_i x_i^2}=0,
\end{split}
\end{equation*}
where we used the Cauchy Schwarz inequality and \eqref{nuprop}. Furthermore, 
\begin{equation*}
\dual{\bm{x}}{(A-\bm{\alpha}\bm{\alpha}^{\rm T})\bm{x}}\leq \dual{\bm{x}}{A\bm{x}}\leq \sigma_{\rm max}\dual{\bm{x}}{\bm{x}}, \; \sigma_{\rm max}=\max_{1\leq i\leq m} \sigma_i.
\end{equation*}
This implies that the eigenvalues of $A$ are all non-negative and bounded above by $\sigma_{\rm max}<1$. This establishes local stability.
In fact, we can get more explicit information about the eigenvalues. Suppose $0<\sigma_m<\cdots < \sigma_2<\sigma_1<1$.
Let $z$ be an eigenvalue. Then,
\begin{equation}
\begin{split}
\det(A-\bm{\nu}\bm{\sigma}^{\rm T}-zI)&=\det(A-z I)\det (I-(A-z I)^{-1}\bm{\nu}\bm{\sigma}^{\rm T})\\
&=\det(A-zI)\paren{1-\dual{(A-zI)^{-1}\bm{\nu}}{\bm{\sigma}}}=0.
\end{split}
\end{equation} 
In the second equality, we used a one-rank update formula for the determinant. Thus, if we can find $m-1$ solutions to the equation:
\begin{equation*}
\paren{1-\dual{(A-zI)^{-1}\bm{\nu}}{\bm{\sigma}}}=1-\sum_{i=1}^m \frac{\nu_i\sigma_i}{z-\sigma_i}=0,
\end{equation*}
then we are done. Since $\nu_i\sigma_i>0$, the above has one solution in each interval $\sigma_{k}<z<\sigma_{k+1}, k=1,\cdots m-1$.

\subsection{Steady state solutions when $\rho = 1$} \label{sect2.2}
Unlike the above analysis, there exist multiple steady states when $\rho = 1$. We determine the steady states and show that one of the steady states can be obtained by taking limit when $\rho < 1$.

First, we view $\bm{u}(\rho)$ and hence $\mu(\rho)$ and other quantities as functions of $\rho$. At $\rho = 1$, \eqref{uss} can be written by $\mu \bm{u} - \bm{f}(\bm{u}) = 0$, which follows
\begin{equation} \label{sss_gencond1}
    u_i [\mu - \phi_i(u_i)] = 0
\end{equation}
If $\phi_i$ is invertible, the solution has to satisfy
\begin{equation} \label{sss_gencond2}
    \sum_{i \in \mathcal{I}} \phi_i^{-1}(\mu) = 1, \quad \mathcal{I} = \{ i : u_i \neq 0 \}.
\end{equation}
There exist $\mu \in (0,1)$ according to the intermediate value theorem.  
Imposing \eqref{Kx}, we have
\begin{equation} \label{ssrho1}
	u_i = \begin{cases}
	K_i/\sum_{i \in \mathcal{I}} K_i, & i \in \mathcal{I} \\
	0, & \text{otherwise}
	\end{cases}
\end{equation}
and
\begin{equation} \label{mu_rho1}
	\mu = \frac{\sum_{i \in \mathcal{I}} K_i}{1+ \sum_{i \in \mathcal{I}} K_i},
\end{equation}
for any non-empty index set $\mathcal{I} \subset \mathcal{I}_m \equiv \{1,\cdots,m\}$. Since $\mathcal{I}_m$ has multiple subsets, there are multiple steady states at $\rho = 1$. 
In other words, the steady state of infinite foragers depends on their initial allocation, so there are no foragers at some foraging sites if no foragers fly out of those sites.

Finally, we determine which of the steady state at $\rho = 1$ is continuous in $\rho$. That is, 
\begin{equation} \label{uss_rho1}
	\lim_{\rho \to 1} u_i(\rho) = \frac{K_i}{\sum_{i=1}^m K_i}, \quad \lim_{\rho \to 1} \mu(\rho) = \frac{\sum_{i=1}^m K_i}{1 + \sum_{i=1}^m K_i}.
\end{equation}
Suppose that the steady state's support is not $\mathcal{I}_m$ so that $u_i(1) = 0$ for some $i$. Let us note that
\begin{equation}
\begin{split}
	\at{\PD{g}{u_i}}{\rho = 1} &= \frac{1}{\mu(1)}\at{\D{f_i}{u_i}}{\rho=1}, \; \at{\D{f_i}{u_i}}{\rho = 1} = \phi_i^2(u_i(1)), \\
	\at{\PD{g}{\mu}}{\rho=1} &= -\frac{f_i(u_i(1))}{\mu^2(1)} = 0, \\
	\at{\PD{g}{\rho}}{\rho=1} &= (1-\mu(1))\left(\frac{f_i(u_i(1))}{\mu(1)} - \frac{1}{m}\right) = \frac{\mu(1)-1}{m}.
\end{split}
\end{equation}
Taking the derivative of \eqref{uieqn} to $\rho$ gives
\begin{equation}
	\at{\PD{u_i}{\rho}}{\rho=1} = \PD{g}{u_i}\at{\PD{u_i}{\rho}}{\rho=1} + \PD{g}{\mu} \sum_{j=1}^m \PD{\mu}{u_j}\at{\PD{u_j}{\rho}}{\rho = 1} + \at{\PD{g}{\rho}}{\rho = 1},
\end{equation}
which follows that 
\begin{equation}
	\at{\PD{u_i}{\rho}}{\rho=1} = \frac{\mu(1)}{m} > 0.
\end{equation}
This implies that $u_i(\rho) < 0$ in some neighborhood of $\rho = 1$, which is a contradiction. Therefore, we proved \eqref{uss_rho1}.

\subsection{Long-term reward rate}
Here we define the foraging reward rate and investigate its optimality, as a function of recruitment probability and distribution of site abundances. The foraging reward is determined by how many foragers are successful for each attempt. We define the foraging reward rate of the infinite population model by the stationary fraction of successful foragers
\begin{equation}
	\mu = \lim_{n \to \infty} r_n,
\end{equation}
which satisfies the steady state equation \eqref{uss}. We are interested in how much communication (or recruitment) maximizes the foraging reward rate.

More precisely, we find the maximum of $\mu$ with respect to $\rho$ using the Lagrangian multiplier method. Setting
\begin{equation} \label{wmaximization}
	\mathcal{L}(\bm{u},z) = \dual{\bm{1}}{\bm{f}(\bm{u})} - z ( \dual{\bm{1}}{\bm{u}} - 1),
\end{equation}
which maximum gives the optimal forager allocation without considering recruitment between foragers. This maximum should not be smaller than one with recruitment. Deriving the critical conditions for $\mathcal{L}$
\begin{equation}
	\PD{\mathcal{L}}{u_i} = \phi_i^2(u_i) - z = 0, \quad i = 1,\cdots,m.
\end{equation}
That is, the foraging process is optimized when foragers are allocated with the same foraging probability for all sites. A steady state solution at $\rho = 1$ satisfies (\ref{sss_gencond1}-\ref{sss_gencond2}) with $\mathcal{I} = \mathcal{I}_m$, which is the same as the above critical condition. Thus the reward rate of the infinite population model is maximized at $\rho = 1$.
In other words, the foraging process of the infinite population is most efficient under perfect recruitment.

Furthermore, in the special case when all foraging sites have the same foraging probability $\phi_i \equiv \phi$ for all $i$, then the reward rate does not depend on $\rho$ because the unique solution has to be uniform $u_i = 1/m$. One particular case is when the resource is uniformly distributed $K_i \equiv K$ for all $i$.
Numerical simulations in Fig. \ref{fig2} also show that the reward rate is maximized at $\rho = 1$. On the contrary, $\mu$ is insensitive to the choice of $\rho$ when $K_i$ is uniform in foraging sites, as seen in Fig. \ref{fig2}. In other words, the foraging efficiency does not matter when resources are evenly distributed.
\begin{figure}[t!]
\centering
\includegraphics[width=.8\columnwidth]{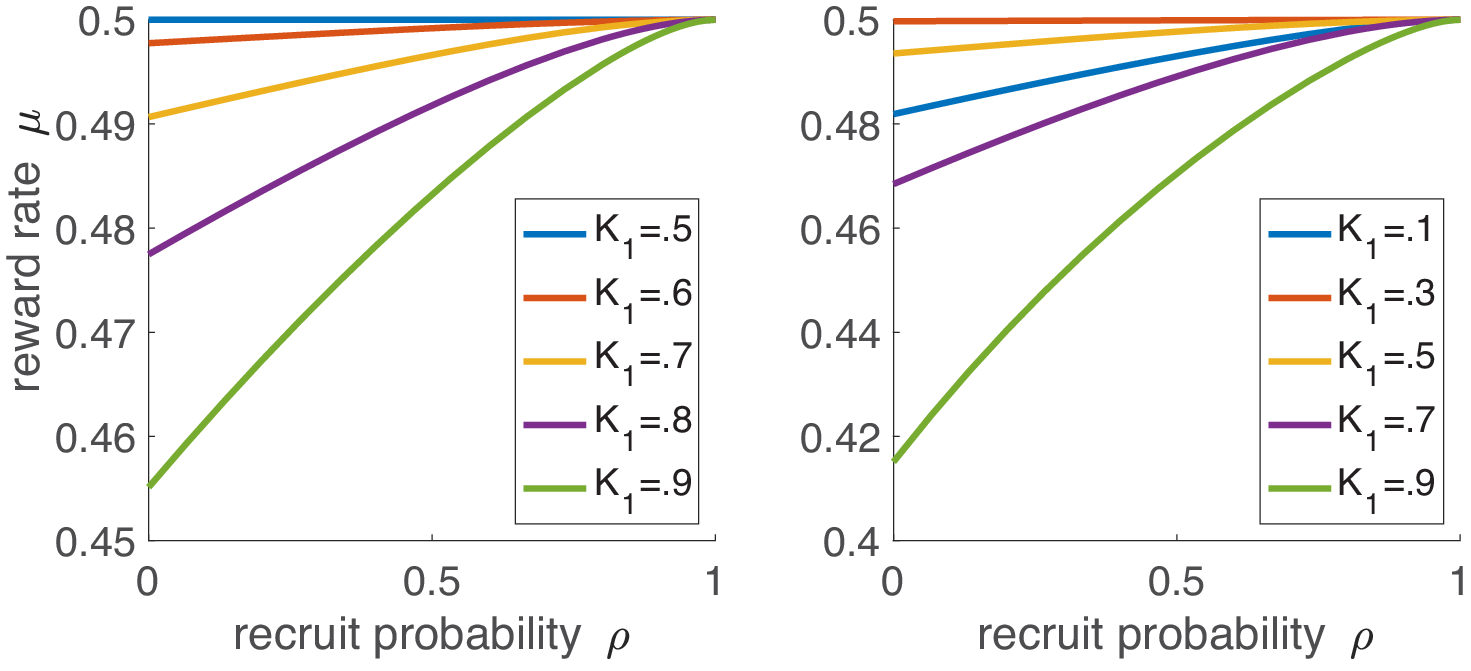}
\caption{\label{fig2} Reward rate of infinite population model as a function of $\rho$ with various parameters. (a) Bidirectional case $m = 2$ where $K_1 + K_2 = 1$. (b) Tri-directional case $m = 3$ where $K_1+K_2 +K_3 = 1$ and $K_2 = K_3$.} 
\end{figure}

An additional observation is that 
\begin{equation}
	\at{\PD{\mu}{\rho}}{\rho = 1} = 0,
\end{equation}
for any $m$. In other words, regardless of the number of total foraging sites, the foraging efficiency is saturated with perfect communication between foragers, which also can be seen in Fig. \ref{fig2}.
This can be shown by the derivative of the second equation in \eqref{uss} with respect to $\rho$.
Since $\phi_i(u_i(1)) = \mu(u_i(1))$, we have
\begin{equation}
	\at{\D{\mu}{\rho}}{\rho=1}=\sum_{i=1}^m \at{\D{f_i}{u_i}\D{u_i}{\rho}}{\rho=1}
=\mu^2(1)\sum_{i=1}^m \at{\D{u_i}{\rho}}{\rho=1}=0
\end{equation}
where in the last equality, we used the fact that the $\sum_{i=1}^m u_i=1$ regardless of $\rho$.

\section{Finite population model} \label{sect3} Now we introduce a finite population model analogous to the infinite-population model analyzed above.
In contrast to the infinite population model in Sect. \ref{sect2}, we have to track intrinsic fluctuations in the case of a finite population.
We cannot impose mass-action principles for quantifying successful foragers and redistributing unsuccessful foragers.
In this section, we introduce a discrete-time Markov chain modeling the foraging process with a finite population. We then define and investigate the reward rate corresponding to one for the infinite population model.

We consider a stochastic version of the foraging model with intrinsic noise due to the discreteness of the dynamics in finite population $\xi$, as illustrated in Fig. \ref{fig3}. 
\begin{figure}[t!]
\centering
\includegraphics[width=.5\columnwidth]{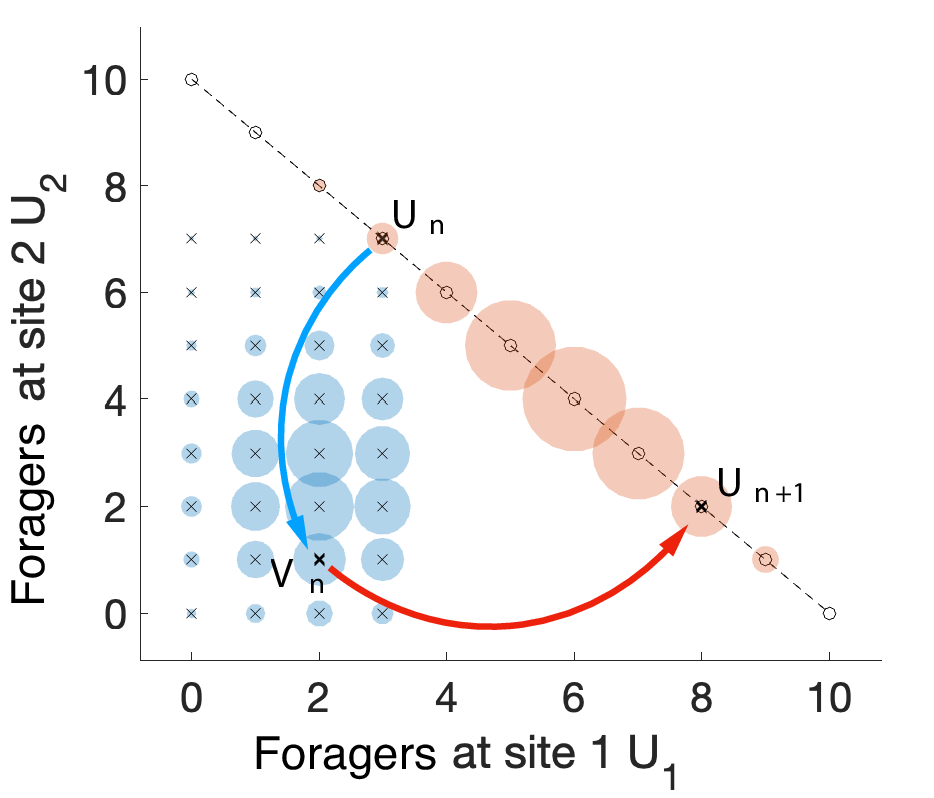}
\caption{\label{fig3} Illustration of discrete-time Markov chain of foraging process by finite population when $m = 2$. The size of a circle represents the probability of transitioning to the corresponding state.} 
\end{figure}
Let $U_{i,n}$ be the number of foragers that fly to site $i = 1,\cdots,m$ at time $n = 0,1,\cdots$. We assume that $\xi$ foragers fly out every time, so we have
\begin{equation} \label{Udef}
	\sum_{i = 1}^m U_{i,n} = \xi, \; U_{i,n} \geq 0.
\end{equation}
We impose an initial state $U_{i,0}$ satisfying \eqref{Udef}.
Of those foragers, those that successfully forage satisfies
\begin{equation} \label{veqn}
	V_{i,n} \sim B(U_{i,n},\phi_i(U_{i,n}/\xi)),
\end{equation}
where $B(x,p)$ is the binomial distribution with $x$ trials and probability $p$. Here we assume that foragers try to forage independently with the identical probability defined in \eqref{success_prob}. Then the total number of successful foragers at time $n$ takes the form
\begin{equation}
	R_n = \sum_{i=1}^m V_{i,n}.
\end{equation}
In contrast to the deterministic model, all the foragers may fail to forage, and thus $R_n = 0$. In this case, we assume that the foragers are reset to the uniform search at the following time
\begin{equation} \label{all_fail}
	q_{i,n} = \frac{1}{m}, \; \text{if } R_n = 0.
\end{equation}
Otherwise, the successful foragers will fly out to the same sites at the next time step, and the unsuccessful foragers will redistribute with probability
\begin{equation} \label{redi_prob}
	q_{i,n} = \rho \frac{U_{i,n}}{R_n} + (1-\rho) \frac{1}{m}, \; \text{if } R_n \neq 0.
\end{equation}
It is convenient to introduce vector notation. Let $\bm{U}_n = (U_{1,n},\cdots,U_{m,n})^T$, $\bm{V}_n = (V_{1,n},\cdots,V_{m,n})^T$, and $\bm{q}_n = (q_{1,n},\cdots,q_{m,n})^T$. Then, the foragers that fly out at time $n+1$ satisfies
\begin{equation} \label{stoc_veceqn}
	\bm{U}_{n+1} - \bm{V}_n \sim M_m(\xi - R_n,\bm{q}),
\end{equation}
where $M_m(x,\bm{q})$ is the multinomial distribution with $x$ trials to $m$ sites with probability $\bm{q}$. One can show that
\begin{equation}
	\dual{\bm{1}}{\bm{U}_{n+1}}=\dual{\bm{1}}{\bm{U}_n} = \xi,
\end{equation}
according to the assumption \eqref{Udef}. 

Here note that the stochastic recurrence relation \eqref{stoc_veceqn} converges (almost surely) to the deterministic recurrence relation \eqref{veceqn} as $\xi \to \infty$ by setting $u_{i,n} = U_{i,n}/\xi$, $v_{i,n} = V_{i,n}/\xi$, and $r_n = R_n/\xi$, according to the law of large numbers. 
However, this is not the only formulation that converges to the same deterministic limit. For example, instead of allocating the successful foragers deterministically, the entire set of foragers can be redistributed by the following:
\begin{equation} \label{model2}
	\bm{U}_{n+1} \sim M_m(\xi,\tilde{\bm{q}}), \quad \tilde{\bm{q}} = \frac{R_n}{\xi} \frac{U_{i,n}}{R_n} + \left(1 - \frac{R_n}{\xi} \right) \bm{q}.
\end{equation}
Since this formulation is based on an unrealistic assumption (the successful foragers have no memory of successful sites, but the entire set of foragers do), we choose the formulation in the above paragraph instead.

Our stochastic model also can be formulated by the Master equation. Let $\bm{x} = (x_1,\cdots,x_m)^T$ and let $p_n(\bm{x})$ be the probability that $x_i$ foragers are flying out to site $i$ at time $n$. Let $A(\bm{x}|\bm{x}')$ be the transition probability from state $\bm{U}_{n} = \bm{x}'$ to $\bm{U}_{n+1} = \bm{x}$. 
Then the master equation takes the form
\begin{equation} \label{Ap}
	p_{n+1}(\bm{x})= \sum_{\bm{x}' \in \mathcal{S}} A(\bm{x}|\bm{x}') p_n(\bm{x}'),
\end{equation}
where the state space preserves the total population $\mathcal{S} = \{ \bm{x} : \dual{\bm{1}}{\bm{x}} = \xi, x_i \geq 0\}$. We now determine the explicit form of  $A(\bm{x}|\bm{x}')$. 
We denote $\bm{y} = (y_1,\cdots,y_m)^T$. The probability that $y_i$ out of $x_i$ foragers forage successfully follows the binomial distribution
\begin{equation}
	\mathbb{P}[V_{i,n} = y_i | U_{i,n} = x_i] = b_2(y_i;x_i,\phi_i(x_i/\xi)),
\end{equation}
where $b_2(y;x,p) = {x \choose y} p^y (1-p)^y$. Assuming that the foraging process is independent of the sites, the probability of overall successful foragers takes the form
\begin{equation}
	a_1(\bm{y}|\bm{x}) \equiv \mathbb{P}[ \bm{V}_n = \bm{y} | \bm{U}_n = \bm{x}] = \prod_{i=1}^m b_2(y_i;x_i,\phi_i(x_i/\xi)).
\end{equation}
Since the unsuccessful foragers are redistributed by the multinomial distribution, the conditional transition probability with given successful forager allocation $\bm{y}$ satisfies
\begin{equation}
	a_2(\bm{x}|\bm{y}) \equiv \mathbb{P}[\bm{U}_{n+1} = \bm{x} | \bm{V}_n = \bm{y}] = b_m(\bm{x} - \bm{y};\xi - r, \bm{q}), 
\end{equation}
where $r = \dual{\bm{1}}{\bm{y}}$ and $\bm{q}$ is the redistribution probability defined in \eqref{redi_prob}. Here the explicit form of the multinomial distribution is
\[
	b_m(\bm{z};x,\bm{q}) = x! \prod_{i=1}^m \frac{q_i^{z_i}}{z_i!},
\]
if $0\leq z_i \leq x_i$ and $\dual{\bm{1}}{\bm{z}} = x$, otherwise zero.
We finally determine the explicit form of the transition probability
\begin{equation}
	A(\bm{x}|\bm{x}') = \sum_{0 \leq \bm{y} \leq \bm{x}'} a_2(\bm{x}|\bm{y}) a_1(\bm{y}|\bm{x}'),
\end{equation}
where $\sum_{0\leq \bm{y} \leq \bm{x}} = \sum_{0 \leq y_1 \leq x_1} \cdots \sum_{0 \leq y_m \leq x_m}$.

\subsection{Stationary distribution}
Here we derive an equation for the stationary distribution and prove that there exists a unique stationary distribution for any $\rho \in [0,1]$ by the Perron-Frobenius theorem (PFT). The stationary distribution of the stochastic model satisfies the following linear system
\begin{equation} \label{sdeqn}
	\pi(\bm{x}) = \sum_{\bm{x}' \in \mathcal{S}} A(\bm{x}|\bm{x}') \pi(\bm{x}').
\end{equation}
Let $\bm{\pi} = (\pi(\bm{x}))_{\bm{x} \in \mathcal{S}}^T$ and we introduce the matrix-vector notation for \eqref{sdeqn}
\begin{equation} \label{sdeqn_mat}
	\bm{\pi} = \bm{A} \bm{\pi}.
\end{equation}
We want to show that the linear system has a unique solution (up to $\dual{\bm{1}}{\bm{\pi}} = 1$). We first show that $A(\bm{x}|\bm{x}') >0$ for any $\bm{x},\bm{x}' \in \mathcal{S}$. Utilizing the total failure event \eqref{all_fail}, we have
\begin{align}
	A(\bm{x}|\bm{x}') &\geq a_2(\bm{x}|\bm{0}) a_1(\bm{0} | \bm{x}') \nonumber \\
	&= b_m(\bm{x};\xi,\bm{1}/m) \prod_{i=1}^m [1-\phi_i(x_i'/\xi)]^{x_i'} >0,
\end{align}
for any $\rho \in [0,1]$. According to the PFT, a positive matrix $\bm{A}$ has a simple eigenvalue (or a simple root of the characteristic polynomial of $\bm{A}$ and thus its eigenspace is one-dimensional), and the eigenvalue is the same as the spectral radius of $\bm{A}$. Since the spectral radius of any stochastic matrix (such as $\bm{A}$) is one, the solution space of \eqref{sdeqn_mat} is one-dimensional. This implies that the dimension of the solution space of the linear system \eqref{sdeqn_mat} is one-dimensional. Furthermore, the PF eigenvector is also positive, so the stochastic model has a unique stationary distribution for any $\rho \in [0,1]$.

What is the population limit of the stationary distribution? If $0 \leq \rho < 1$, the deterministic model \eqref{veceqn} has a unique fixed point $\bm{u}$, which is linearly stable. We expect that the population limit of the stationary distribution converges to the fixed point and this is confirmed by numerical simulations shown in Fig. \ref{fig4} and \ref{fig7}. That is, we expect
\begin{equation} \label{poplim_pdf}
    \lim_{\xi \to \infty} \sum_{\bm{x}/\xi \in S} \pi(\bm{x}) = \mathcal{I}_{\bm{u}}(S) \equiv \begin{cases}
        1, & \bm{u} \in S \\
        0, & \bm{u} \notin S
    \end{cases},
\end{equation}
where $\mathcal{I}$ is an indicator function. 
The main obstacle in proving this statement is showing the global stability of the deterministic fixed point $\bm{u}$.
Note that when $\rho = 1$ we have multiple steady states for the deterministic process, and so the limiting stationary distribution is not as simple as when $0 \leq \rho < 1$.

\subsection{Expectation of long-term reward rate}
Similar to the infinite population model, we define the reward rate for the finite population model. Since the number of successful foragers $R_n$ is now a random variable and depends on the total population $\xi$, we define the reward rate by the asymptotic expected fraction of successful foragers
\begin{equation}
	\mu_\xi = \lim_{n \to \infty} \frac{\mathbb{E}[R_n]}{\xi}.
\end{equation}
The reward rate can be determined by the stationary distribution
\begin{align}
	\mu_\xi &= \lim_{n \to \infty} \frac{1}{\xi}  \sum_{\bm{x} \in \mathcal{S}} \sum_{0 \leq \bm{y} \leq \bm{x}}\dual{\bm{1}}{\bm{y}} \mathbb{P}[ \bm{V}_n =\bm{y} | \bm{U}_n = \bm{x}] \mathbb{P}[\bm{U}_n = \bm{x}] \nonumber \\
	&= \sum_{\bm{x} \in \mathcal{S}} \sum_{i=1}^m f(x_i/\xi) \pi(\bm{x}).
\end{align}

One interesting feature of the stochastic model is that $\mu_\xi(\rho)$ can be non-monotonic and thus have an intermediate maximum (denoted by $\rho^\star$). In contrast, the infinite population model (or deterministic limit) has a trivial maximum at $\rho = 1$. 
Numerical simulation in Fig. \ref{fig4}(a) shows that $\mu_\xi(\rho)$ can be non-monotonic with a sufficiently large $\xi$ if $\mu(\rho)$ is an increasing function (This statement will be proven for any $m$ later in Sect. \ref{sect4}).
In other words, for any finite population, the foraging process somehow becomes inefficient with high (close to perfect) recruitment, which contrasts sharply with the infinite-population case.

The inefficiency of a finite population with high recruitment rate is also observed when the resource is uniformly distributed and $\mu(\rho)$ is constant, as seen in Fig. \ref{fig4}(b).
We investigate how this inefficiency arises in a finite population in Sect. \ref{sect4}.
\begin{figure}[t!]
\centering
\includegraphics[width=.8\columnwidth]{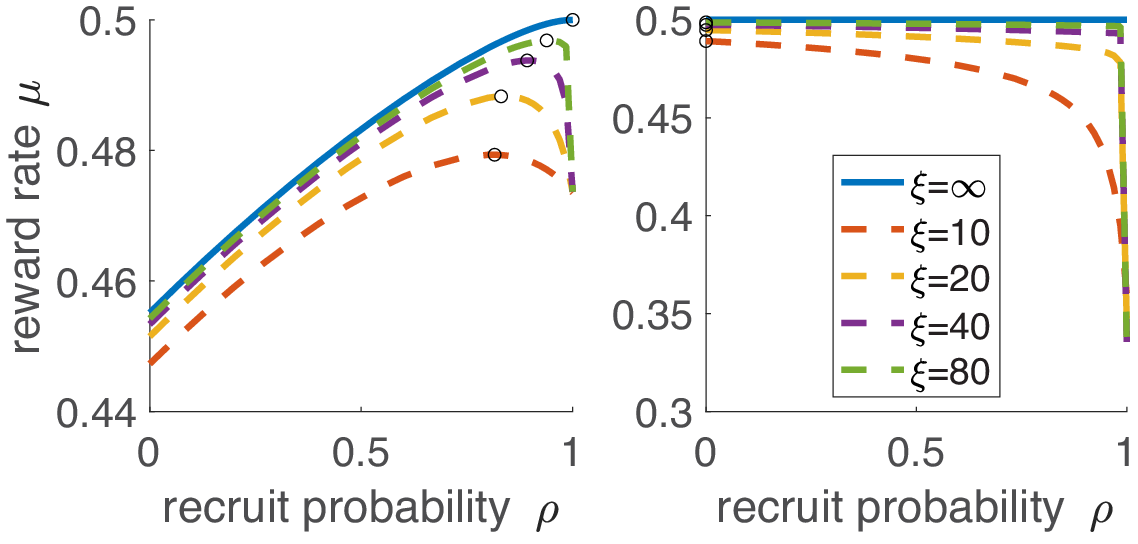}
\caption{\label{fig4} Expectation of reward rate of finite population model as a function of $\rho$ with various population size $\xi$. The reward rate is maximized at an intermediate optimal recruitment probability $\rho^\star$ (\textit{black circles}). (a) Biased resource abundance. $K_1 = 0.9$. (b) Unbiased resource abundance. $K_1 = 0.5$. Parameters as follows: $m = 2$ and $K_1 + K_2 = 1$.} 
\end{figure}
\begin{figure}[t!]
\centering
\includegraphics[width=.5\columnwidth]{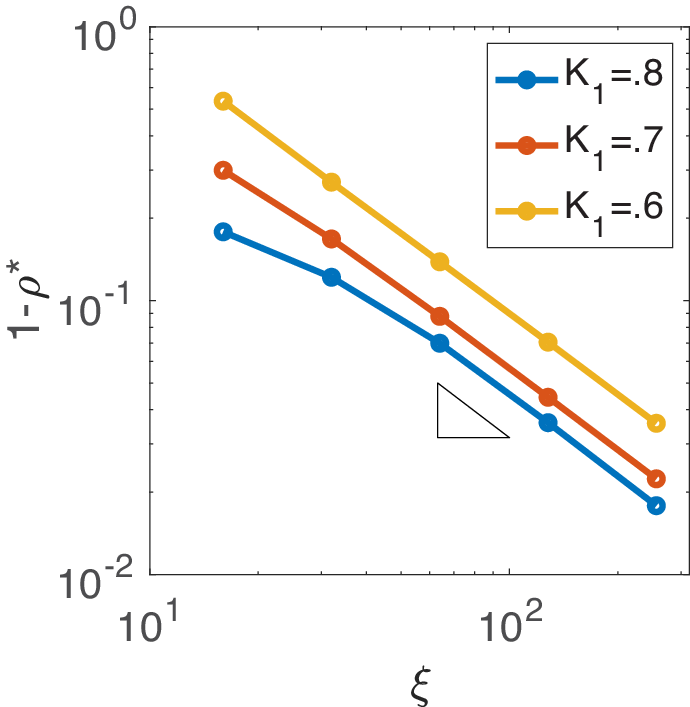}
\caption{\label{fig5} Convergence of $\rho^\star$ in $\xi$ with various $K_i$ when $m=2$. The slope of the triangle is $-1$.} 
\end{figure}
How does $\rho^*$ converge as $\xi \to \infty$? The optimal recruitment probability depends on the population size and it converges to the trivial maximum ($\rho^* = 1$) as $\xi \to \infty$ except for the uniform resource distribution ($\rho^* = 0$), as seen in Fig. \ref{fig4}. Furthermore, Fig. \ref{fig5} shows that
\begin{equation} \label{asympt}
    1 - \rho^* = \mathcal{O}(\xi^{-1}),
\end{equation}
when $m = 2$. 

\subsection{Relaxation time}
The stationary reward rate allows us to understand the optimal foraging strategy for a long (or infinite) period. But what is behavior of a group of foragers over a finite period of time? According to the exploration-exploitation trade-off, the foraging efficiency also depends on the convergence rate of the foraging system \eqref{Ap}. 
Suppose that $\bm{A}$ has eigenvalues $1 > \lambda_2 \geq …. \geq \lambda_{|\bm{A}|}$ where $|\bm{A}|$ is the size of square matrix $\bm{A}$ and the corresponding eigenvectors $\bm{v}_1, \bm{v}_2, \cdots, \bm{v}_{|\bm{A}|}$. We know that the largest eigenvalue is $1$ because \eqref{Ap} always has a unique stationary distribution, as also seen in Fig. \ref{fig_supp1}. Then the solution for the master equation can be written by
\begin{equation}
    \bm{p}_n = \bm{\pi} + \sum_{i=2}^{|\bm{A}|} c_i \lambda_i^n \bm{v}_i,
\end{equation}
which follows that the convergence rate is determined by the second largest eigenvalues $\lambda_2$. Fig. \ref{fig_supp1} shows that $\lambda_2$ increases in $\rho$ because a low recruitment probability lead to ``diffuse'' foragers over the foraging sites immediately. Therefore, if the foraging time is sufficiently short, the net reward over finite foraging time can be maximized at $\rho < \rho^*$. For the same reason, in the case of the infinite-population model, the optimal recruitment probability for a short time is smaller than $\rho^*$.

\begin{figure}[t!]
\centering
\includegraphics[width=.8\columnwidth]{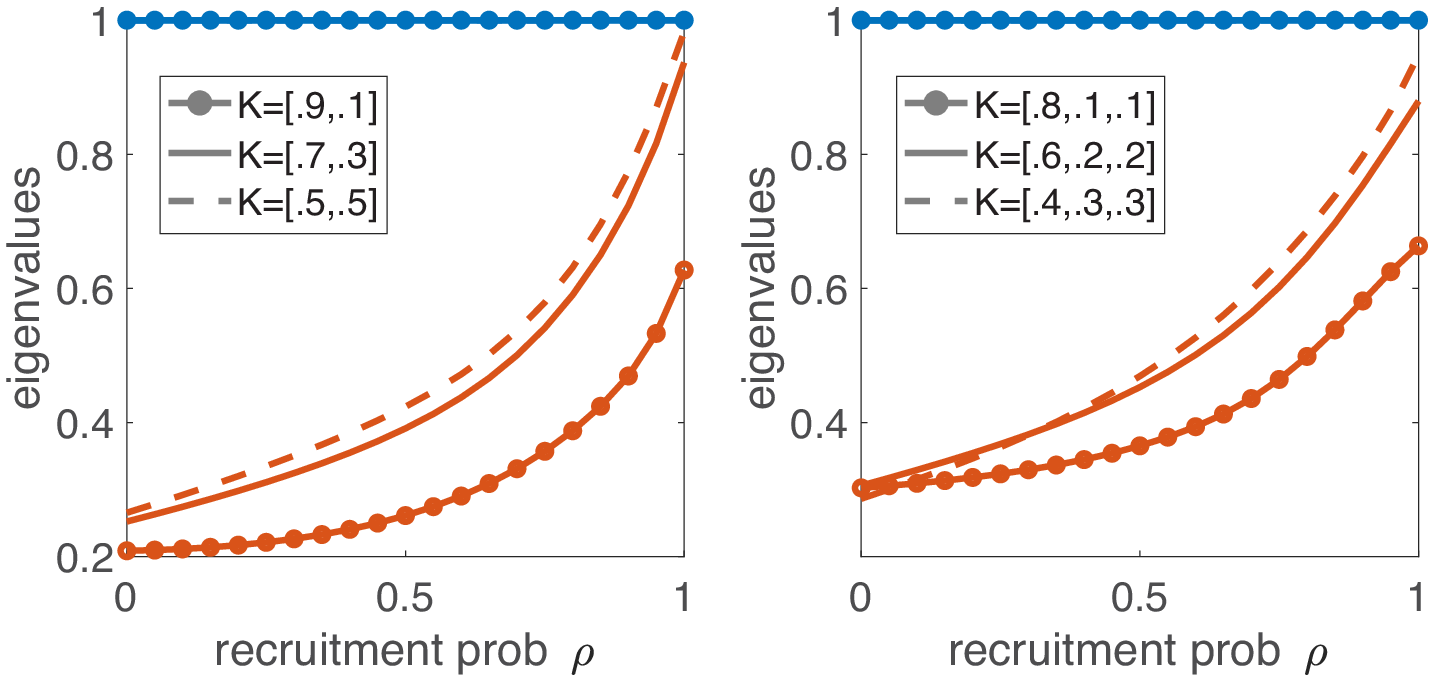}
\caption{\label{fig_supp1} The first (\textit{blue}) and the second largest eigenvalues (\textit{red}) of the transition matrix $\bm{A}$ in the master equation \eqref{Ap}. (a) Two foraging sites with various $K$. (b) Corresponding plot for three foraging sites. $\xi = 10$.} 
\end{figure}

\section{Quasi-steady states in finite population model} \label{sect4}
Why is the foraging process inefficient at a high recruitment probability and a large population?
The answer is tightly related to how the stationary distribution $\pi$ at $\rho = 1$ differs from that in $\rho < 1$.
Both distributions concentrate near the stable fixed points for the deterministic model, but there are multiple when $\rho = 1$, as shown in \eqref{mu_rho1}. We call these states as quasi-steady states (QSS) because $\pi(\bm{x})$ is proportional to the dwell time for given state $\bm{x}$.
Fig. \ref{fig6} and \ref{fig7} shows that the stationary distribution concentrates at the QSSs. 
However, the stationary distribution concentrates only at the most resource-abundant QSS as $\xi \to \infty$, whereas the others decays. In other words, all foragers are allocated to the most abundant resource site for the most of the time, which leads to overpopulation with low foraging probability. This inefficiency can be improved by decreasing recruitment probability $\rho$ because it weakens the foragers' collective behavior.

\begin{figure}[t!]
\centering
\includegraphics[width=.8\columnwidth]{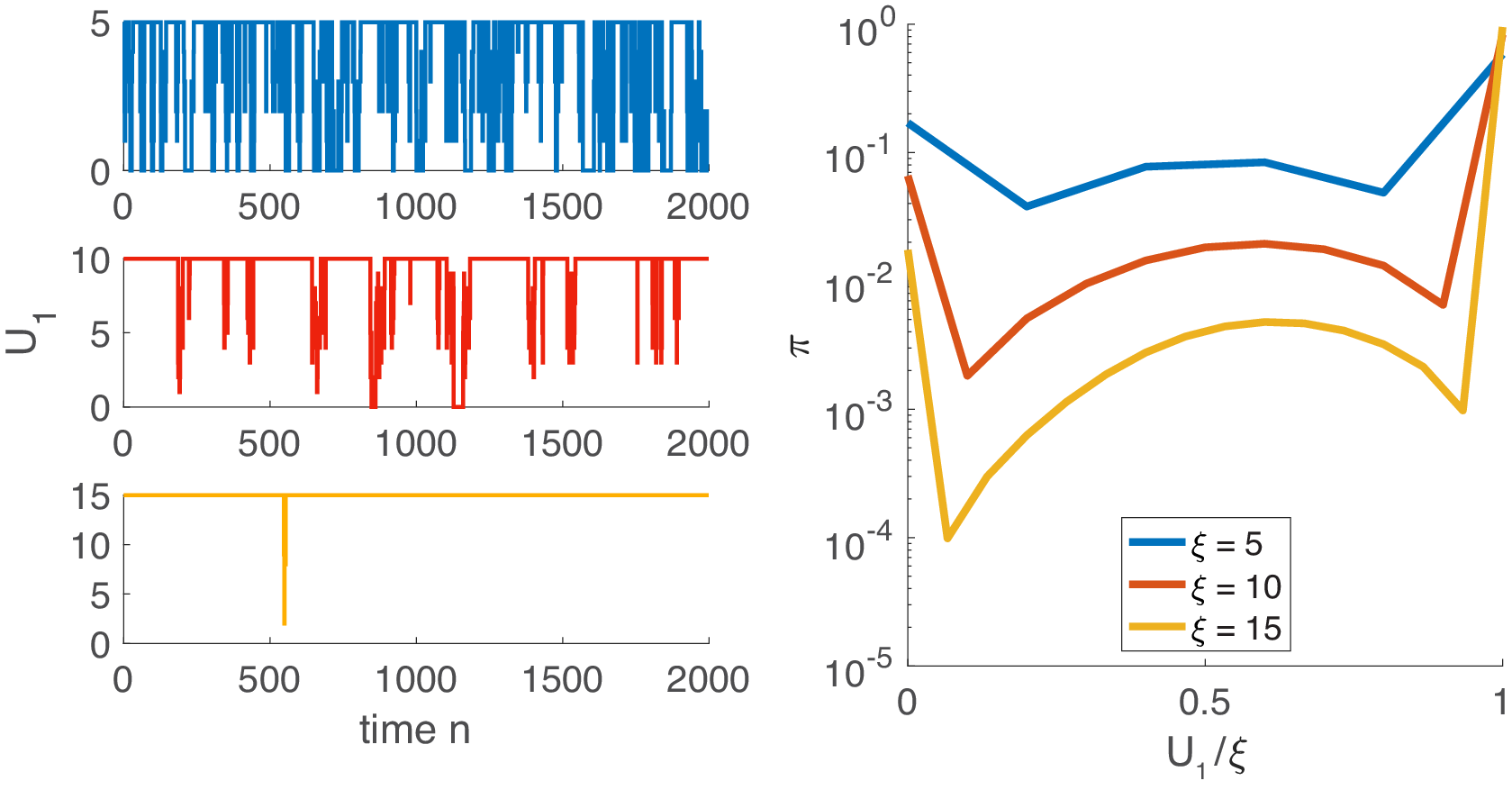}
\caption{\label{fig6} Quasi-steady state in finite population model when $\rho = 1$. (a) Sample trajectories with various $\xi$ when $m = 2$. (b) Corresponding plots of stationary distribution by solving \eqref{sdeqn}. Parameters are as follows: $K_1 = 0.9$ and $K_2 = 0.1$.} 
\end{figure}
\begin{figure}[t!]
\centering
\includegraphics[width=.8\columnwidth]{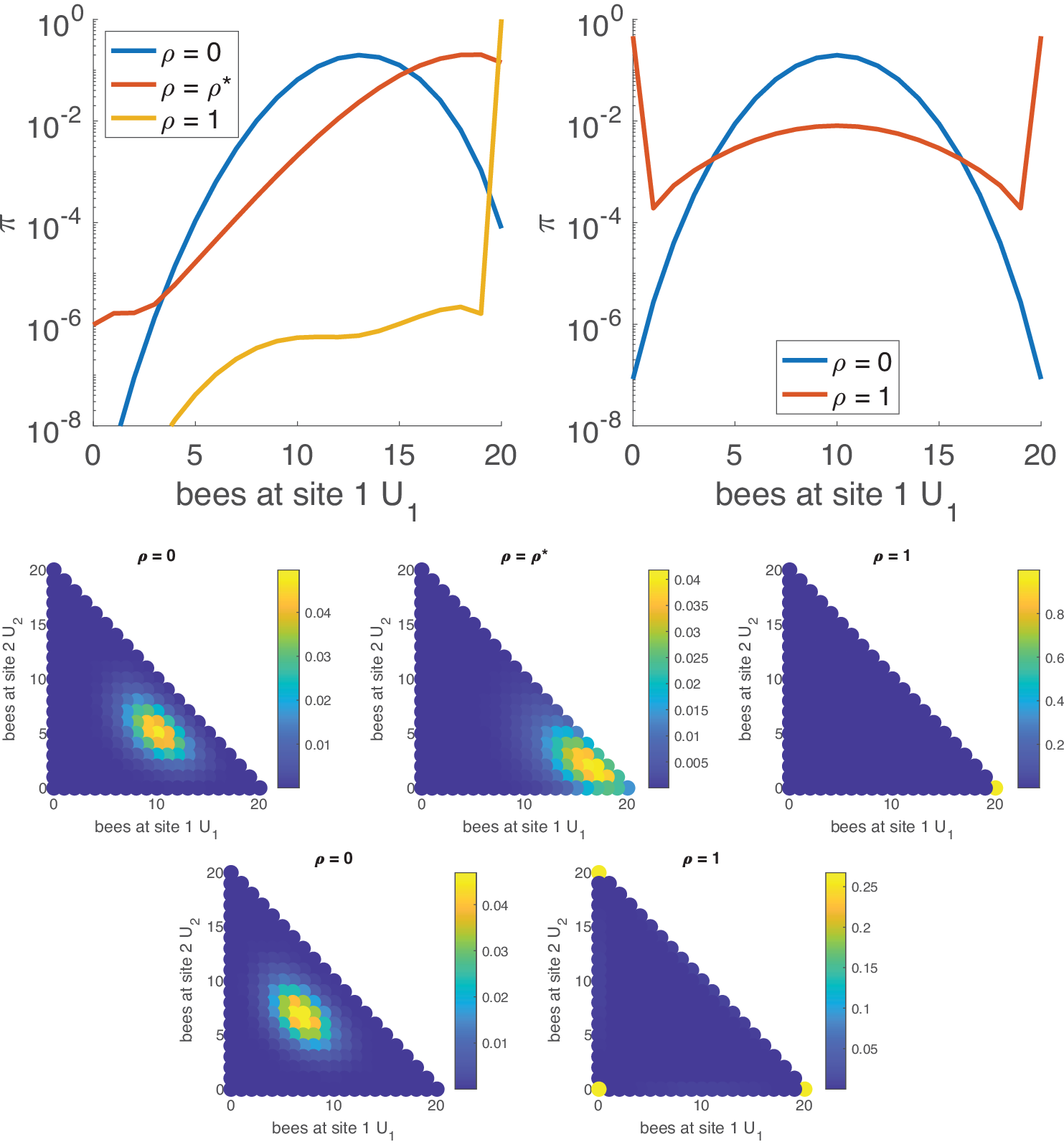}
\caption{\label{fig7} Stationary distribution of finite population model with various parameters. (a) Biased (\textit{left}, $K_1 = 0.9$) and unbiased (\textit{right}) resource distribution when $m = 2$. (b) Biased (\textit{top}, $K_1 = 0.8, K_2 = K_3 = 0.1$) and unbiased (\textit{bottom}) resource distribution when $m = 3$. Other parameter as follows: $\xi = 20$.} 
\end{figure}

In this chapter, we investigate this inefficiency by analyzing the stationary distribution at $\rho = 1$. First, we approximate the stationary distribution by the mean escape time from the QSSs and see why the most resource-abundant site is preferred by high recruitment foragers. Next, we prove that the distribution concentrates on the ``edges'' of the state space, which includes most of the QSSs. Therefore, the reward rate $\mu_\xi(\rho)$ can have a boundary layer at $\rho = 1$, which explains the inefficiency at high recruitment probability and an intermediate optimal recruitment $\rho^* < 1$.

\subsection{Mean escape time and stationary distribution} \label{sect4.1}
To understand intuitively why stationary distribution concentrates at the most resource-abundant foraging site when $\rho = 1$, it is approximated by in terms of the escape time from the QSSs.
The escape event happens only when all foragers fail to forage, and thus relocated uniformly.
Let $\bm{x} = \xi \bm{u}$ be a QSS. Then the escape probability from the QSS can be written by
\begin{equation}
    P_\text{esc}(\bm{x}) = \left[\prod_{i=1}^m \left(1- \phi_i(u_i)\right)^{u_i}\right]^\xi \equiv [\psi(\bm{u})]^\xi,
\end{equation}
which follows the mean escape time
\begin{equation}
    \tau_\text{esc}(\bm{x}) = P_\text{esc}^{-1}(\bm{x}).
\end{equation}
After escape, it relocates to a QSS with different probability quickly at large $\xi$. 
Assuming that the relocation probabilities are the same and the dwell times for non-QSS states are negligible, the stationary distribution can be approximated by
\begin{equation}
    \pi(\bm{x}) \approx \frac{\tau_\text{esc}(\bm{x})}{\sum_{\bm{x}'} \tau_\text{esc}(\bm{x}')} = \frac{[\psi(\bm{u})]^{-\xi}}{\sum_{\bm{u}'} [\psi(\bm{u}')]^{-\xi}} \equiv \pi_\text{approx}(\bm{u},\xi).
\end{equation}
Suppose that $K_1 > K_2 > \cdots > K_m \geq 0$. Since $\psi(\bm{u})$ is maximized at $\bm{u}_0 = (1,0,\cdots,0)^T$, the limit of the stationary distribution approximation goes to
\begin{equation}
    \lim_{\xi \to \infty} \pi_\text{approx}(\bm{u},\xi) = \mathcal{I}_{\bm{u}_0}(\bm{u}),
\end{equation}
where $\mathcal{I}$ is the indicator function.
If $K_i$ are not strictly ordered and $K_1 = K_2 =\cdots = K_{m'}$, then we have
\begin{equation}
    \lim_{\xi \to \infty} \pi_\text{approx}(\bm{u},\xi) = \frac{1}{m'}\sum_{i=1}^{m'}\mathcal{I}_{\bm{u}_i}(\bm{u}),
\end{equation}
where the $i$th entry of vector $\bm{u}_i \in [0,1]^m$ is $1$ and the other entries are $0$. 
Overall, the stationary distribution is high at maximum $K_i$ because it is harder to escape from that foraging site than the others.

\subsection{Stationary distribution convergence on edge states}

Instead of approximation, we prove that the stationary distribution concentrates on set $\mathcal{E}$ that includes most of the QSSs as $\xi \to \infty$. 
We introduce the following sets
\[
    \mathcal{B} = \{ \bm{x} \in \mathcal{S} | x_i > 0 \text{ for all } 1 \leq i \leq m\}, \quad \mathcal{E} = \mathcal{S} \backslash \mathcal{B},
\]
which separate the state space $\mathcal{S}$ into its bulk $\mathcal{B}$ and edges $\mathcal{E}$.
Note that all QSSs are included in $\mathcal{E}$ except one in \eqref{uss_rho1}. In this section, we want to prove 
\begin{equation} \label{conv_weak}
    \lim_{\xi \to \infty} \sum_{\bm{x} \in \mathcal{E}} \pi(\bm{x}) = 1.
\end{equation}

We begin our proof by decomposing \eqref{sdeqn_mat} into the following form: 
\begin{equation}\label{ABE}
\begin{pmatrix}
A_{\mc{B}\mc{B}} & A_{\mc{B}\mc{E}}\\
A_{\mc{E}\mc{B}} & A_{\mc{E}\mc{E}}
\end{pmatrix}
\begin{pmatrix}
\bm{\pi}_{\mc{B}}\\
\bm{\pi}_{\mc{E}}
\end{pmatrix}
=\begin{pmatrix}
\bm{\pi}_{\mc{B}}\\
\bm{\pi}_{\mc{E}}
\end{pmatrix}, \quad \bm{\pi}=\begin{pmatrix}
\bm{\pi}_{\mc{B}}\\
\bm{\pi}_{\mc{E}}
\end{pmatrix}
, \quad A=\begin{pmatrix}
A_{\mc{B}\mc{B}} & A_{\mc{B}\mc{E}}\\
A_{\mc{E}\mc{B}} & A_{\mc{E}\mc{E}}
\end{pmatrix}.
\end{equation}
Let us also introduce the notation $\norm{\cdot}_1$ to denote the vector and matrix $1$-norms. That is to say, for a vector $\bm{v}\in \mathbb{R}^N$ and a $M\times N$ matrix $C$,
\begin{equation}
\norm{v}_1=\sum_{k=1}^N \abs{v_i}, \; \norm{C}_1=\max_{1\leq j\leq N} \sum_{i=1}^M \abs{c_{ij}},
\end{equation}
where $v_i$ and $c_{ij}$ are the elements of the vector and matrix respectively. Our goal is equivalent to show that:
\begin{equation}
\lim_{\xi \to \infty }\norm{\bm{\pi}_{\mc{B}}}_1=0.
\end{equation}
From \eqref{ABE}, we have the equation:
\begin{equation}
A_{\mc{B}\mc{B}}\bm{\pi}_{\mc{B}}+A_{\mc{B}\mc{E}}\bm{\pi}_{\mc{E}}=\bm{\pi}_{\mc{B}}.
\end{equation}
From this, we see that:
\begin{equation}
\bm{\pi}_{\mc{B}}=(I-A_{\mc{B}\mc{B}})^{-1}A_{\mc{B}\mc{E}}\bm{\pi}_{\mc{E}}.
\end{equation}
Noting that $\norm{\bm{\pi}_{\mc{E}}}_1\leq 1$, we have:
\begin{equation}\label{piBest}
\norm{\bm{\pi}_{\mc{B}}}_1\leq \norm{(I-A_{\mc{B}\mc{B}})^{-1}}_1\norm{A_{\mc{B}\mc{E}}}_1\leq \frac{1}{1-\norm{A_{\mc{B}\mc{B}}}_1}\norm{A_{\mc{B}\mc{E}}}_1.
\end{equation}
We will now estimate $\norm{A_{\mc{B}\mc{E}}}_1$ and $\norm{A_{\mc{B}\mc{B}}}_1$. 
We first consider $A_{\mc{B}\mc{E}}$. Take a column vector of $A_{\mc{B}\mc{E}}$ that corresponds to state $\bm{x}\in \mc{E}$, and call it $\bm{p}_{\bm{x}}$.
The $1$-norm of $\bm{p}_{\bm{x}}$ is the probability that we transition from $\bm{U}_n=\bm{x}=(x_1,\cdots,x_m)^{\rm T}\in \mc{E}$ 
to one of the bulk states  so that $\bm{U}_{n+1}\in \mc{B}$.
Since at least one of the $x_i=0$, this can only happen if $V_{i,n}=0$ for all $i$ (no foragers are successful).
Thus:
\begin{equation}
\begin{split}
\norm{\bm{p}_{\bm{x}}}_1&\leq\mathbb{P}[V_{i,n}=0 \text{ for all } 1\leq i\leq m | \bm{U}_n=\bm{x}] \\
&=\prod_{i=1}^m (1-\phi_i(x_i/\xi))^{x_i}=\prod_{i=1}^m \paren{\frac{x_i/\xi}{(x_i/\xi)+K_i}}^{x_i}
\end{split}
\end{equation}
where we adopt the convention $0^0=1$. Note that:
\begin{equation}\label{alphastar}
\frac{x_i/\xi}{(x_i/\xi)+K_i}\leq \frac{1}{1+K_i}\leq \frac{1}{1+K_{\rm min}}=\alpha_*, \quad K_{\rm min}=\min_{1\leq i\leq m} K_i.
\end{equation}
Thus,
\begin{equation}
\norm{\bm{p}_{\bm{x}}}_1\leq \prod_{i=1}^m \alpha_*^{x_i}=\alpha_*^{\xi}.
\end{equation}
Therefore,
\begin{equation}\label{ABEest}
\norm{A_{\mc{B}\mc{E}}}_1=\max_{\bm{x}\in \mc{E}}\norm{\bm{p}_{\bm{x}}}_1\leq \alpha_*^\xi
\end{equation}
Now, we turn to the estimation of $\norm{A_{\mc{B}\mc{B}}}_1$. 
Let $\bm{q}_{\bm{x}}$ be the column vector of $A_{\mc{B}\mc{B}}$ corresponding to state $\bm{x}\in \mc{B}$.
The $1$-norm of $\bm{q}_{\bm{x}}$ is the probability that, starting at a state $\bm{x}\in \mc{B}$, you are back in one of the states in $\mc{B}$.
This can happen in two ways. The first way in which this can happen is that $V_{i,n}\geq 1$ for all $i$ (at least one forager is successful for every foraging site).
The other way in which this can happen is that none of the foragers are successful. We thus have the following upper bound:
\begin{equation}\label{P1P2}
\begin{split}
\norm{\bm{q}_{\bm{x}}}_1&\leq \mathbb{P}[V_{i,n}\geq 1 \text{ for all } 1\leq i\leq m | \bm{U}_n=\bm{x}] \\
&+\mathbb{P}[V_{i,n}=0 \text{ for all } 1\leq i\leq m | \bm{U}_n=\bm{x}]. 
\end{split}
\end{equation}
The second probability has already been estimated:
\begin{equation}\label{P2est}
\mathbb{P}[V_{i,n}=0 \text{ for all } 1\leq i\leq m | \bm{U}_n=\bm{x}]\leq \alpha_*^{\xi}
\end{equation}
For the first probability, we have:
\begin{equation}
\begin{split}
&\mathbb{P}[V_{i,n}\geq 1 \text{ for all } 1\leq i\leq m | \bm{U}_n=\bm{x}]=\prod_{i=1}^m \mathbb{P}[V_{i,n}\geq 1| U_{i,n}=x_i]\\
=&\prod_{i=1}^m \paren{1-\mathbb{P}[V_{i,n}=0| U_{i,n}=x_i]}=\prod_{i=1}^m \paren{1-(1-\phi_i(x_i/\xi))^{x_i}}\\
=&\prod_{i=1}^m \paren{1-\paren{\frac{x_i/\xi}{(x_i/\xi) +K_i}}^{x_i}}
\end{split}
\end{equation}
Let us now rewrite the above expression using $u_i=x_i/\xi$. We have:
\begin{equation}\label{P1est}
\mathbb{P}[V_{i,n}\geq 1 \text{ for all } 1\leq i\leq m | \bm{U}_n=\bm{x}]\leq \prod_{i=1}^m \paren{1-\paren{\frac{u_i}{u_i +K_i}}^{u_i\xi}}
\end{equation}
Note that the $u_i$ satisfy:
\begin{equation}\label{simplex}
\sum_{i=1}^m u_i=1, \; 0\leq u_i\leq 1.
\end{equation}
Define the vector $\bm{u}=(u_1,\cdots, u_m)^{\rm T}$. Then the vector $\bm{u}$ lies in the above $m-1$ dimensional simplex $\Sigma$.
Define:
\begin{equation}
G_i(\bm{u})=\paren{\frac{u_i}{u_i+K_i}}^{u_i}, \; i=1,\cdots m.
\end{equation}
The above is only defined for $u_i>0$. However, 
since $\lim_{z\to 0+} z^z=1$, we can extend the above functions to be continuous functions on $\Sigma$ (including the edges).
Inequalities \eqref{P1est}, \eqref{P2est} and \eqref{P1P2} thus yield:
\begin{equation}\label{qxest1}
\norm{\bm{q}_{\bm{x}}}_1\leq \prod_{i=1}^m \paren{1-G_i(\bm{u})^\xi}+\alpha_*^{\xi}, \; \bm{x}=\xi\bm{u}.
\end{equation}
Noting that $0\leq G_i(\bm{u})\leq 1$, we have:
\begin{equation}\label{qxest2}
\norm{\bm{q}_{\bm{x}}}_1\leq \min_{1\leq i\leq m}\paren{1-G_i(\bm{u})^\xi}+\alpha_*^\xi=1-\paren{\max_{1\leq i\leq m} G_i(\bm{u})}^\xi +\alpha_*^{\xi}.
\end{equation}
We prove a technical lemma.
\begin{lemma}\label{alphabeta}
Recall that $\Sigma$ is the closed $m-1$ dimensional simplex defined in \eqref{simplex}. We have:
\begin{equation}
\alpha_*<\beta_*=\min_{\bm{u}\in \Sigma} \max_{1\leq i\leq m} G_i(\bm{u})\leq 1,
\end{equation}
where $\alpha_*$ is defined in \eqref{alphastar}.
\end{lemma}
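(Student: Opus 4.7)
The plan is to reduce everything to the single-variable functions $g_i(t) := (t/(t+K_i))^t$ on $[0,1]$, noting that $G_i(\bm{u})$ depends only on $u_i$. The upper bound $\beta_* \leq 1$ is immediate, since each $G_i$ is a power of a number in $[0,1]$. The substantive content is the strict lower bound $\alpha_* < \beta_*$, which I would establish by first showing the pointwise inequality $\max_i G_i(\bm{u}) > \alpha_*$ throughout $\Sigma$ and then invoking compactness.

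The analytic heart of the proof is strict monotonicity of each $g_i$. I would compute
\[
\frac{d}{dt}\log g_i(t) = \log\frac{t}{t+K_i} + \frac{K_i}{t+K_i},
\]
and, with the substitution $s = K_i/(t+K_i) \in (0,1]$, rewrite this as $\log(1-s) + s$. This function vanishes at $s = 0$ and has derivative $-s/(1-s) < 0$ on $(0,1)$, so it is strictly negative on $(0,1]$; translating back, $(\log g_i)'(t) < 0$ for all $t \in (0,1]$. Combined with continuity at the left endpoint (via $t\log t \to 0$, which gives $g_i(0) = 1$), this yields $g_i$ strictly decreasing on $[0,1]$. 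In particular $g_i(t) > g_i(1) = 1/(1+K_i)$ for every $t \in [0,1)$.

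With monotonicity in hand, I would fix an index $j_0$ realizing $K_{j_0} = K_{\min}$, so that $\alpha_* = g_{j_0}(1)$. For any $\bm{u} \in \Sigma$, either some $u_i = 0$, in which case $G_i(\bm{u}) = g_i(0) = 1 > \alpha_*$, or every $u_i > 0$; in the latter case $m \geq 2$ together with $\sum_i u_i = 1$ forces $u_{j_0} < 1$, hence $G_{j_0}(\bm{u}) = g_{j_0}(u_{j_0}) > g_{j_0}(1) = \alpha_*$ by strict monotonicity. Either way $\max_i G_i(\bm{u}) > \alpha_*$. Since $\Sigma$ is compact and $\bm{u} \mapsto \max_i G_i(\bm{u})$ is continuous (the $G_i$ extend continuously to the edges, as the paper already notes), the minimum defining $\beta_*$ is attained at some $\bm{u}^\star \in \Sigma$, and the pointwise bound at $\bm{u}^\star$ yields $\beta_* > \alpha_*$. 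The main obstacle is the monotonicity calculation for $g_i$; once that is pinned down, everything else is a routine case split followed by a compactness argument.
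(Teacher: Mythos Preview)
Your proof is correct and follows essentially the same route as the paper: strict monotonicity of each $G_i$ in $u_i$, a case split between the boundary (some $u_i=0$, giving $G_i=1$) and the interior (all $u_i<1$, giving $G_{j_0}>\alpha_*$), and then compactness of $\Sigma$ to pass from the pointwise strict inequality to the minimum. The only cosmetic differences are that you verify the sign of $(\log g_i)'$ via the substitution $s=K_i/(t+K_i)$ and the inequality $\log(1-s)+s<0$, whereas the paper uses $y=u_i/(u_i+K_i)$ and $\log y-(y-1)<0$; these are the same inequality under $s=1-y$. You also make the assumption $m\geq 2$ explicit, which is implicit in the paper.
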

\begin{proof}
Let 
\begin{equation}
G_{\rm max}(\bm{u})=\max_{1\leq i\leq m} G_i(\bm{u}).
\end{equation}
The function $G_{\rm max}(\bm{u})$ is continuous on $\Sigma$ since the functions $G_i(\bm{u})$ are continuous on $\Sigma$.
Note that:
\begin{equation}
\PD{}{u_i} \log G_i(\bm{u})=\log\paren{\frac{u_i}{u_i+K_i}}-\paren{\frac{u_i}{u_i+K_i}-1}<0
\end{equation}
where we used the fact that $\log(y)-(y-1)<0$ for $y<0$. This means that $G_i(\bm{u})$, which only depends on $u_i$, 
is a strictly monotone decreasing function of $u_i$. Thus, for $\bm{u}\in \Sigma$, 
\begin{equation}
G_i(\bm{u})=\frac{1}{1+K_i} \text{ when } u_i=1,\quad G_i(\bm{u})>\frac{1}{1+K_i} \text{ otherwise}. 
\end{equation}
In particular, 
\begin{equation}
G_i(\bm{u})> \frac{1}{1+K_i} \text{ for } \bm{u}\in \Sigma \backslash \partial \Sigma,
\end{equation}
where $\partial \Sigma$ is the boundary of $\Sigma$. This shows that
\begin{equation}
G_{\rm max}(\bm{u})> \max_{1\leq i\leq m}\frac{1}{1+K_i}=\alpha_* \text{ for } \bm{u}\in \Sigma \backslash \partial \Sigma.
\end{equation}
Let us examine the value of $G_{\rm max}(\bm{u})$ for $\bm{u}\in \partial \Sigma$. On $\partial \Sigma$, at least one of the coordinates $u_i$ is equal to $0$.
Note that $G_i(0)=1$. Thus, 
\begin{equation}
G_{\rm max}(\bm{u})=1>\alpha_* \text{ for } \bm{u}\in \partial \Sigma.
\end{equation}
We thus see that:
\begin{equation}
G_{\rm max}(\bm{u})>\alpha_* \text{ for } \bm{u}\in \Sigma.
\end{equation}
Since $G_{\rm max}$ is continuous and $\Sigma$ is compact, it attains a minimum at some point $\bm{u}_*\in \Sigma$.
Thus,
\begin{equation}
\beta_*=\min_{\bm{u}\in \Sigma} G_{\rm max}(\bm{u})=G_{\rm max}(\bm{u}_*)>\alpha_*.
\end{equation}
\end{proof}
The above lemma, together with \eqref{qxest1}, shows that:
\begin{equation}
\norm{\bm{q}_{\bm{x}}}_1\leq 1-\beta_*^\xi +\alpha_*^{\xi}, \; \alpha_*<\beta_*\leq 1.
\end{equation}
So, we have:
\begin{equation}\label{ABBest}
\norm{A_{\mc{B}\mc{B}}}_1=\sup_{\bm{x}\in \mc{B}} \norm{\bm{q}_{\bm{x}}}_1 \leq 1-\beta_*^\xi +\alpha_*^{\xi}.
\end{equation}
It is now a simple matter to prove the following proposition.
\begin{proposition}\label{p:piB}
Consider the $\bm{\pi}_{\mc{B}}$ in \eqref{ABE}. We have:
\begin{equation}
\norm{\bm{\pi}_{\mc{B}}}_1\leq 2\paren{\frac{\alpha_*}{\beta_*}}^{\xi}.
\end{equation}
In particular, 
\begin{equation}
\lim_{\xi \to \infty} \norm{\bm{\pi}_{\mc{B}}}_1=0.
\end{equation}
\end{proposition}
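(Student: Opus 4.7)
The plan is to invoke the estimates \eqref{piBest}, \eqref{ABEest}, and \eqref{ABBest} together with Lemma \ref{alphabeta}, essentially assembling pieces that are already proved. This is a "combine the ingredients" step rather than one with a deep new idea.

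First I would substitute \eqref{ABEest} and \eqref{ABBest} into \eqref{piBest}. This gives
\begin{equation*}
\norm{\bm{\pi}_{\mc{B}}}_1 \;\leq\; \frac{\norm{A_{\mc{B}\mc{E}}}_1}{1-\norm{A_{\mc{B}\mc{B}}}_1} \;\leq\; \frac{\alpha_*^{\xi}}{1-(1-\beta_*^\xi+\alpha_*^\xi)} \;=\; \frac{\alpha_*^\xi}{\beta_*^\xi-\alpha_*^\xi} \;=\; \frac{r^\xi}{1-r^\xi},
\end{equation*}
where I set $r=\alpha_*/\beta_*$. By Lemma \ref{alphabeta} we have $0<r<1$, so $1-r^\xi>0$ for every $\xi\geq 1$, and the denominator is in particular strictly positive (so the estimate from \eqref{piBest} is legitimate).

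Next I would convert $r^\xi/(1-r^\xi)$ into the advertised form $2r^\xi$ by case-splitting on the size of $r^\xi$. If $r^\xi\leq 1/2$, then $1-r^\xi\geq 1/2$ and the computed bound already yields $\norm{\bm{\pi}_{\mc{B}}}_1\leq 2r^\xi$. If instead $r^\xi>1/2$, then $2r^\xi>1$, and the trivial bound $\norm{\bm{\pi}_{\mc{B}}}_1\leq 1$ (which holds since $\bm{\pi}$ is a probability distribution) gives $\norm{\bm{\pi}_{\mc{B}}}_1\leq 1<2r^\xi$. Either way, the desired inequality $\norm{\bm{\pi}_{\mc{B}}}_1\leq 2(\alpha_*/\beta_*)^\xi$ holds.

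The limit $\lim_{\xi\to\infty}\norm{\bm{\pi}_{\mc{B}}}_1=0$ then follows immediately from the strict inequality $\alpha_*<\beta_*$ supplied by Lemma \ref{alphabeta}, since $(\alpha_*/\beta_*)^\xi\to 0$ geometrically. There is no real obstacle here: the genuine analytic content has already been absorbed into the lemma (the separation of $\alpha_*$ from $\beta_*$), and the only mild subtlety is the clean-up of the factor $1/(1-r^\xi)$ to get the stated constant $2$, which is handled by the trivial bound on the remaining small values of $\xi$.
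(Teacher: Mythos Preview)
Your proof is correct and essentially identical to the paper's: both substitute \eqref{ABEest} and \eqref{ABBest} into \eqref{piBest} and then handle the constant $2$ by splitting on whether $r^\xi\leq 1/2$ (the paper phrases this equivalently as $\xi\geq \xi_0=\log_{\beta_*/\alpha_*}2$), invoking the trivial bound $\norm{\bm{\pi}_{\mc{B}}}_1\leq 1$ on the complementary range.
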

\begin{proof}
For the first assertion, first note that:
\begin{equation}
\beta_*^\xi -\alpha_*^{\xi}\geq \frac{1}{2}\beta_*^\xi \text{ for } \xi\geq \xi_0=\log_{\beta_*/\alpha_*} 2.
\end{equation}
This is possible since $\beta_*>\alpha_*$. Using \eqref{ABBest} together with the above, we thus have:
\begin{equation}
\norm{A_{\mc{B}\mc{B}}}_1\leq 1-\frac{1}{2}\beta_*^\xi \text{ for } \xi\geq \xi_0.
\end{equation}
Using this, together with \eqref{ABEest} and \eqref{piBest}, we have:
\begin{equation}
\norm{\bm{\pi}_{\mc{B}}}_1\leq 2\paren{\frac{\alpha_*}{\beta_*}}^{\xi} \text{ for } \xi\geq \xi_0.
\end{equation}
Since $\norm{\bm{\pi}_{\mc{B}}}_1\leq 1$, the above holds even for $\xi<\xi_0$.
The second assertion follows from the first since $\alpha_*<\beta_*$ by Lemma \ref{alphabeta}.
\end{proof}

\subsection{Existence of intermediate optimal recruitment} \label{sect4.3}
The above has the following implication for the long-term reward rate $\mu_\xi(\rho)$ and $\mu(\rho)$ near $\rho = 1$. The reward rate satisfies
\begin{equation}
\mu_\xi(\rho)=\sum_{\bm{x}\in \mc{S}} \paren{\sum_{i=1}^m\paren{f_i(x_i/\xi)}}\pi({\bm{x}}).
\end{equation}
Now, let us consider the case $\rho=1$. We split the sum:
\begin{equation}\label{winfty}
\mu_\xi(1)=\sum_{\bm{x}\in \mc{B}} \paren{\sum_{i=1}^m\paren{f_i(x_i/\xi)}}\pi({\bm{x}})+\sum_{\bm{x}\in \mc{E}} \paren{\sum_{i=1}^m\paren{f_i(x_i/\xi)}}\pi({\bm{x}})
\end{equation}
Define:
\begin{equation}
\mc{E}_j=\lbrace \bm{\bm{x}}=(x_1,\cdots,x_m)^{\rm T}\in \mc{E}| x_j=0\rbrace. 
\end{equation}
Clearly, 
\begin{align}
\sum_{\bm{x}\in \mc{E}} \paren{\sum_{i=1}^m\paren{f_i(x_i/\xi)}}\pi({\bm{x}}) &\leq \max_{\bm{x}\in \mc{E}}\paren{\sum_{i=1}^m\paren{f_i(x_i/\xi)}} \nonumber \\
&\leq \max_{1\leq j\leq m}\max_{\bm{x}\in \mc{E}_j}\paren{\sum_{i=1}^m\paren{f_i(x_i/\xi)}}.
\end{align}
From the calculation in \eqref{wmaximization}, we know that
\begin{equation}
\max_{\bm{x}\in \mc{E}_j}\paren{\sum_{i=1}^m\paren{f_i(x_i/\xi)}}=\frac{\sum_{\ell\neq j} K_\ell}{1+\sum_{\ell \neq j} K_\ell}.
\end{equation}
Without loss of generality, let us order the $K_i$ so that
\begin{equation}
K_1\geq K_2\geq \cdots \geq K_m>0.
\end{equation}
Then, combining the above chain of inequalities, we have:
\begin{equation}\label{winf2est}
\sum_{\bm{x}\in \mc{E}} \paren{\sum_{i=1}^m\paren{f_i(x_i/\xi)}}\pi({\bm{x}})\leq \max_{1\leq j\leq m}\frac{\sum_{\ell\neq j} K_\ell}{1+\sum_{\ell \neq j} K_\ell}=\frac{\sum_{\ell=1}^{m-1} K_\ell}{1+\sum_{\ell=1}^{m-1} K_\ell}
\end{equation}
For the other term in \eqref{winfty}, we have:
\begin{equation}\label{winf1est}
\begin{split}
\sum_{\bm{x}\in \mc{B}} \paren{\sum_{i=1}^m\paren{f_i(x_i/\xi)}}\pi({\bm{x}})&\leq \max_{\bm{x}\in \mc{B}}\paren{\sum_{i=1}^m\paren{f_i(x_i/\xi)}}\norm{\bm{\pi}_{\mc{B}}}_1\\
&=\frac{2\sum_{\ell=1}^m K_\ell}{1+\sum_{\ell=1}^m K_\ell} \paren{\frac{\alpha_*}{\beta_*}}^\xi.
\end{split}
\end{equation}
Combining \eqref{winf1est} and \eqref{winf2est} with \eqref{winfty}, we have:
\begin{equation}
\mu_\xi(1)\leq \frac{2\sum_{\ell=1}^m K_\ell}{1+\sum_{\ell=1}^m K_\ell}\paren{\frac{\alpha_*}{\beta_*}}^{\xi}+\frac{\sum_{\ell=1}^{m-1} K_\ell}{1+\sum_{\ell=1}^{m-1} K_\ell} 
\end{equation}
Therefore, for sufficiently large $\xi$, we always have:
\begin{equation}
\mu_\xi(1)<\frac{2\sum_{\ell=1}^m K_\ell}{1+\sum_{\ell=1}^m K_\ell}\paren{\frac{\alpha_*}{\beta_*}}^{\xi}+\frac{\sum_{\ell=1}^{m-1} K_\ell}{1+\sum_{\ell=1}^{m-1} K_\ell} < \frac{\sum_{\ell=1}^m K_\ell}{1+\sum_{\ell=1}^m K_\ell}.
\end{equation}
The right-hand side of the above inequality is what we expect from the deterministic model. For every finite $\xi$, $\mu_\xi(\rho)$ is continuous in $\rho$ up to $\rho=1$.
Thus, at and near $\rho=1$, the total reward from the stochastic model is always less than that from the deterministic model for sufficiently large $\xi$
\begin{equation} \label{gap}
    \lim_{\xi \to \infty} \lim_{\rho \to 1^-}\mu_\xi(\rho) < \lim_{\rho \to 1^-}\lim_{\xi \to \infty} \mu_\xi(\rho).
\end{equation}
In other words, those limiting operators are not interchangable at $\rho = 1$.

On the other hand, if $0 \leq \rho < 1$, 
the stationary distribution of the stochastic model is expected to concentrate near the unique fixed point of the deterministic model, as we discussed in \eqref{poplim_pdf}.
This implies that
\[
    \lim_{\xi \to \infty} \mu_\xi(\rho) = \mu(\rho), \quad 0 \leq \rho < 1.
\]
Then there exists a boundary layer at $\rho = 1$, as seen in Fig. \ref{fig4}, which follows $\rho^\star < 1$.
Similarly, we expect that there is a boundary layer for the model variation in \eqref{model2}. That is, the boundary layer does not come from a peculiar model choice.

\section{Discussion}

We have developed a model of collective foraging from a central site, for both an infinite and a finite population. Even though the central recurrence relationship for the fraction of foragers at each site converges to the infinite-population case by the law of large numbers, the behavior of the finite-population model is qualitatively difference from its infinite-population analogue, regardless of the population size. In particular, the long-term reward rate is maximized when successful individuals always recruit others to their known site, in the infinite population case; but for any finite population, the reward rate is maximized by intermediate levels of communication and recruitment.

There are many open questions and avenues for future research based on the framework we have developed.
One crucial set of questions is how foraging efficiency changes with foragers social structures. For example, within a population one can consider a finer communication network structure among foragers \cite{Madhushani2021a} instead of mass action.  Introducing a subgroup of specialized foragers \cite{Staps2022}, such as dedicated searchers who never follow other recruiters, may improve the efficiency at high recruitment rates in the remainder of foragers. Another related question concerns the optimal recruitment rate in the context of multiple competing groups (hives), when foraging at a site is subject to both within-group competition as well as between group competition. Finally, adapting to changing environment is another critical factor for the survival of social foragers \cite{Kilpatrick2021,Barendregt2022,Staps2022,Arehart2022}. For example, what is the optimal recruitment probability in the presence of dynamic resource availability $K_l$? 
All of these remain as interesting open questions that our model may be generalized to study.

Although the existence of a boundary layer in our model is strongly suggested by Eq. \eqref{gap}, it still remains to prove the convergence of the stationary distribution \eqref{poplim_pdf}. This might be achieved by showing the global stability of the unique fixed point of the population limit model \eqref{veceqn}.
Also, this boundary layer does not arise from a peculiar choice of foraging probability $\phi_i(u_i)$ in \eqref{success_prob} and \eqref{veqn}. In fact, it arises from the communication structure (represented by recruitment probability $\rho$) in our model, which makes multiple fixed points for \eqref{veceqn} only at $\rho = 1$. A boundary layer can exist at $\rho = 1$ as long as $\phi_i(u)$ is decreasing because the foraging site with highest $\phi_i(1)$ is preferred, as seen in Sect. \ref{sect4.1}.

\section*{Acknowledgments} This work was supported by the Simons Foundation, USA (Math+X Grant to Y.M.) and the National Science Foundation, USA (Grant No. DMS-2042144 to Y.M.).

\bibliographystyle{siamplain}
\bibliography{references}

\begin{thebibliography}{10}

\bibitem{Arehart2022}
{\sc E.~Arehart and F.~R. Adler}, {\em Quantifying the fitness benefit of
  learning in changing environments}, bioRxiv,  (2022), pp.~2022--12.

\bibitem{Arehart2022b}
{\sc E.~Arehart, J.~R. Reimer, and F.~R. Adler}, {\em Strategy maps:
  Generalised giving‐up densities for optimal foraging}, Ecology Letters, 26
  (2022), pp.~398--410.

\bibitem{Barendregt2022}
{\sc N.~W. Barendregt, J.~I. Gold, K.~Josi{\'c}, and Z.~P. Kilpatrick}, {\em
  Normative decision rules in changing environments}, eLife, 11 (2022),
  p.~e79824.

\bibitem{Bell1990}
{\sc W.~J. Bell}, {\em Searching behavior patterns in insects}, Annual review
  of entomology, 35 (1990), pp.~447--467.

\bibitem{Bidari2022}
{\sc S.~Bidari, A.~El~Hady, J.~D. Davidson, and Z.~P. Kilpatrick}, {\em
  Stochastic dynamics of social patch foraging decisions}, Physical review
  research, 4 (2022), p.~033128.

\bibitem{Biesmeijer2001}
{\sc J.~C. Biesmeijer and H.~de~Vries}, {\em Exploration and exploitation of
  food sources by social insect colonies: a revision of the scout-recruit
  concept}, Behavioral Ecology and Sociobiology, 49 (2001), pp.~89--99.

\bibitem{Boesch2000}
{\sc C.~Boesch and H.~Boesch-Achermann}, {\em The chimpanzees of the Ta{\"\i}
  Forest: Behavioural ecology and evolution}, Oxford University Press, USA,
  2000.

\bibitem{Camazine1991}
{\sc S.~Camazine and J.~Sneyd}, {\em A model of collective nectar source
  selection by honey bees: self-organization through simple rules}, Journal of
  theoretical Biology, 149 (1991), pp.~547--571.

\bibitem{David2009}
{\sc E.~David~Morgan}, {\em Trail pheromones of ants}, Physiological
  entomology, 34 (2009), pp.~1--17.

\bibitem{Davidson2019}
{\sc J.~D. Davidson and A.~El~Hady}, {\em Foraging as an evidence accumulation
  process}, PLoS computational biology, 15 (2019), p.~e1007060.

\bibitem{Detrain2008}
{\sc C.~Detrain and J.-L. Deneubourg}, {\em Collective decision-making and
  foraging patterns in ants and honeybees}, Advances in insect physiology, 35
  (2008), pp.~123--173.

\bibitem{Dukas1998}
{\sc R.~Dukas and L.~Edelstein-Keshet}, {\em The spatial distribution of
  colonial food provisioners}, Journal of Theoretical Biology, 190 (1998),
  pp.~121--134.

\bibitem{Fretwell1969}
{\sc S.~D. Fretwell}, {\em On territorial behavior and other factors
  influencing habitat distribution in birds}, tech. report, North Carolina
  State University. Dept. of Statistics, 1969.

\bibitem{Garg2021}
{\sc K.~Garg and C.~T. Kello}, {\em Efficient l{\'e}vy walks in virtual human
  foraging}, Scientific reports, 11 (2021), pp.~1--12.

\bibitem{Giraldeau2000}
{\sc L.-A. Giraldeau and T.~Caraco}, {\em Social foraging theory}, Princeton
  University Press, 2000.

\bibitem{Grueter2018}
{\sc C.~C. Grueter, A.~M. Robbins, D.~Abavandimwe, V.~Vecellio, F.~Ndagijimana,
  T.~S. Stoinski, and M.~M. Robbins}, {\em Quadratic relationships between
  group size and foraging efficiency in a herbivorous primate}, Scientific
  Reports, 8 (2018), p.~16718.

\bibitem{Haldane1954}
{\sc J.~B. Haldane and H.~Spurway}, {\em A statistical analysis of
  communication in “apis mellifera” and a comparison with communication in
  other animals}, Insectes sociaux, 1 (1954), pp.~247--283.

\bibitem{Ioannou2008}
{\sc C.~Ioannou and J.~Krause}, {\em Searching for prey: the effects of group
  size and number}, Animal Behaviour, 75 (2008), pp.~1383--1388.

\bibitem{Kilpatrick2021}
{\sc Z.~P. Kilpatrick, J.~D. Davidson, and A.~El~Hady}, {\em Uncertainty drives
  deviations in normative foraging decision strategies}, Journal of the Royal
  Society Interface, 18 (2021), p.~20210337.

\bibitem{Lai1985}
{\sc T.~L. Lai, H.~Robbins, et~al.}, {\em Asymptotically efficient adaptive
  allocation rules}, Advances in applied mathematics, 6 (1985), pp.~4--22.

\bibitem{Lattimore2020}
{\sc T.~Lattimore and C.~Szepesv{\'a}ri}, {\em Bandit algorithms}, Cambridge
  University Press, 2020.

\bibitem{Lenoir1982}
{\sc A.~Lenoir}, {\em An informational analysis of antennal communication
  during trophallaxis in the ant myrmica rubra l.}, Behavioural processes, 7
  (1982), pp.~27--35.

\bibitem{Madhushani2021b}
{\sc U.~Madhushani, A.~Dubey, N.~Leonard, and A.~Pentland}, {\em One more step
  towards reality: Cooperative bandits with imperfect communication}, in
  Advances in Neural Information Processing Systems, A.~Beygelzimer,
  Y.~Dauphin, P.~Liang, and J.~W. Vaughan, eds., 2021.

\bibitem{Madhushani2021a}
{\sc U.~Madhushani and N.~E. Leonard}, {\em Distributed bandits: Probabilistic
  communication on d-regular graphs}, in 2021 European Control Conference
  (ECC), IEEE, 2021, pp.~830--835.

\bibitem{Pena2018}
{\sc J.~Pe{\~n}a and G.~N{\"o}ldeke}, {\em Group size effects in social
  evolution}, Journal of Theoretical Biology, 457 (2018), pp.~211--220.

\bibitem{Robbins1952}
{\sc H.~Robbins}, {\em Some aspects of the sequential design of experiments},
  (1952).

\bibitem{Schaffer1988}
{\sc M.~E. Schaffer}, {\em Evolutionarily stable strategies for a finite
  population and a variable contest size}, Journal of theoretical biology, 132
  (1988), pp.~469--478.

\bibitem{Schurch2015}
{\sc R.~Sch{\"u}rch and F.~L. Ratnieks}, {\em The spatial information content
  of the honey bee waggle dance}, Frontiers in Human Neuroscience, 3 (2015),
  p.~22.

\bibitem{Seeley1991}
{\sc T.~D. Seeley, S.~Camazine, and J.~Sneyd}, {\em Collective decision-making
  in honey bees: how colonies choose among nectar sources}, Behavioral Ecology
  and Sociobiology, 28 (1991), pp.~277--290.

\bibitem{Staps2022}
{\sc M.~Staps and C.~E. Tarnita}, {\em When being flexible matters: Ecological
  underpinnings for the evolution of collective flexibility and task
  allocation}, Proceedings of the National Academy of Sciences, 119 (2022),
  p.~e2116066119.

\bibitem{Stephens1982}
{\sc D.~W. Stephens and E.~L. Charnov}, {\em Optimal foraging: some simple
  stochastic models}, Behavioral Ecology and Sociobiology, 10 (1982),
  pp.~251--263.

\bibitem{Sutton2018}
{\sc R.~S. Sutton and A.~G. Barto}, {\em Reinforcement learning: An
  introduction}, MIT press, 2018.

\bibitem{Viswanathan1996}
{\sc G.~M. Viswanathan, V.~Afanasyev, S.~V. Buldyrev, E.~J. Murphy, P.~A.
  Prince, and H.~E. Stanley}, {\em L{\'e}vy flight search patterns of wandering
  albatrosses}, Nature, 381 (1996), pp.~413--415.

\bibitem{Viswanathan1999}
{\sc G.~M. Viswanathan, S.~V. Buldyrev, S.~Havlin, M.~Da~Luz, E.~Raposo, and
  H.~E. Stanley}, {\em Optimizing the success of random searches}, nature, 401
  (1999), pp.~911--914.

\bibitem{Ward1965}
{\sc P.~Ward}, {\em The breeding biology of the black-faced dioch quelea quelea
  in nigeria}, Ibis, 107 (1965), pp.~326--349.

\bibitem{Ward1973}
{\sc P.~Ward and A.~Zahavi}, {\em The importance of certain assemblages of
  birds as “information-centres” for food-finding}, Ibis, 115 (1973),
  pp.~517--534.

\bibitem{Zahavi1971}
{\sc A.~Zahavi}, {\em The social behaviour of the white wagtail motacilla alba
  alba wintering in israel}, Ibis, 113 (1971), pp.~203--211.

\end{thebibliography}
\end{document}